\documentclass[sigconf, authorversion]{acmart}
\usepackage{myCommands}

\AtBeginDocument{%
  }

\copyrightyear{2026}
\acmYear{2026}
\setcopyright{cc}
\setcctype{by}
\acmConference[ISSAC '26]{51st International Symposium on Symbolic and Algebraic Computation}{July 13--17, 2026}{Oldenburg, Germany}
\acmBooktitle{51st International Symposium on Symbolic and Algebraic Computation (ISSAC '26), July 13--17, 2026, Oldenburg, Germany}
\acmDOI{10.1145/3815436.3815460}
\acmISBN{979-8-4007-2595-1/2026/07}

\begin{document}

%%
%% The "title" command has an optional parameter,
%% allowing the author to define a "short title" to be used in page headers.
\title{On Minimum CADs for Algebraic Sets in Dimension Three}

%%
%% The "author" command and its associated commands are used to define
%% the authors and their affiliations.
%% Of note is the shared affiliation of the first two authors, and the
%% "authornote" and "authornotemark" commands
%% used to denote shared contribution to the research.

\author{Lucas Michel}
\orcid{0000-0002-4115-7296}
\affiliation{%
  \institution{University of Liège}%\\  Mathematics Research Unit}
  %\city{Hekla}
  \country{Belgium}
  }
\email{lucas.michel@uliege.be}

%%
%% By default, the full list of authors will be used in the page
%% headers. Often, this list is too long, and will overlap
%% other information printed in the page headers. This command allows
%% the author to define a more concise list
%% of authors' names for this purpose.
%\renewcommand{\shortauthors}{L. Michel}

%%
%% The abstract is a short summary of the work to be presented in the
%% article.
\begin{abstract}
    Cylindrical Algebraic Decomposition (CAD) algorithms typically produce a decomposition adapted to a finite family of semi-algebraic sets $\mathcal{F}$ (i.e. every member of $\mathcal{F}$ is a union of cells). Different algorithms may produce different outputs, and introduce unnecessary cell divisions. Recent work by Michel, Mathonet, and Zénaïdi in ISSAC 2024 formalised this issue by studying the refinement order on the set of all CADs adapted to $\mathcal{F}$ and analysing the existence of a minimum (coarsest) adapted CAD. It was shown that such a minimum adapted CAD always exists for subsets of $\mathbb{R}$ and $\mathbb{R}^2$, but not of $\mathbb{R}^n$ ($n \geqslant 3$) in general.
    
    It is natural to seek natural classes of subsets of $\mathbb{R}^n$ that admit a minimum adapted CAD.     
    In this paper, we identify a class of subsets of $\mathbb{R}^3$ that contains all algebraic sets for which minimum adapted CADs do exist. This provides the first positive existence theorem for minimum CAD for a non-trivial class of sets. 
\end{abstract}

%%
%% The code below is generated by the tool at http://dl.acm.org/ccs.cfm.
%% Please copy and paste the code instead of the example below.
%%
% \begin{CCSXML}
% <ccs2012>
%    <concept>
%        <concept_id>10010147.10010148.10010164</concept_id>
%        <concept_desc>Computing methodologies~Representation of mathematical objects</concept_desc>
%        <concept_significance>500</concept_significance>
%        </concept>
%  </ccs2012>
% \end{CCSXML}

% \ccsdesc[500]{Computing methodologies~Representation of mathematical objects}

%%
%% Keywords. The author(s) should pick words that accurately describe
%% the work being presented. Separate the keywords with commas.
\keywords{Cylindrical Algebraic Decomposition, Algebraic set, Minimum CAD}
%% A "teaser" image appears between the author and affiliation
%% information and the body of the document, and typically spans the
%% page.
% \begin{teaserfigure}
%   \includegraphics[width=\textwidth]{sampleteaser}
%   \caption{Seattle Mariners at Spring Training, 2010.}
%   \Description{Enjoying the baseball game from the third-base
%   seats. Ichiro Suzuki preparing to bat.}
%   \label{fig:teaser}
% \end{teaserfigure}

% \received{20 February 2007}
% \received[revised]{12 March 2009}
% \received[accepted]{5 June 2009}

%%
%% This command processes the author and affiliation and title
%% information and builds the first part of the formatted document.

\maketitle

\section{Introduction}
Cylindrical Algebraic Decomposition (CAD) (introduced by Collins \cite{collins1975} for quantifier elimination over real closed fields), is a cornerstone of computational real algebraic geometry and plays a central role in several areas of computer algebra, including quantifier elimination, symbolic computation, and SMT solving. Its importance is reflected in the large body of work devoted to improving CAD-based methods, both theoretically and practically. As a consequence, different CAD algorithms may produce different decompositions for the same input, often with superfluous cell divisions. In this paper, we are interested in CADs that are as coarse as possible, and hence free of such superfluous cell divisions.

A CAD can be seen as a datastructure to represent semi-algebraic sets. Given a finite family $\Fc$ of semi-algebraic subsets of $\R^n$, a CAD $\Cr$ is adapted to (or represents) $\Fc$ if every member of $\Fc$ is a union of cells of $\Cr$. In \cite{miniCAD} and the extended version \cite{miniCAD-extended}, we introduced a formal framework for comparing CADs adapted to $\Fc$ through the refinement ordering (a CAD $\Cr$ is smaller than a CAD $\Dr$ if $\Cr$ is coarser than $\Dr$). This allows one to reason about CADs from an order-theoretic perspective and to distinguish between minimal CADs, which cannot be coarsened while remaining adapted, and minimum CADs, which are canonical, redundancy-free CAD that are coarser than any other adapted CAD. It was proved that every (finite family of) semi-algebraic set in $\mathbb{R}$ and $\mathbb{R}^2$ admit a minimum CAD. However, this property fails in higher dimensions: explicit counterexamples in $\mathbb{R}^n$ ($n \geqslant 3$) exhibit several semi-algebraic sets that admits several distinct minimal CADs, and hence no minimum one. Thus, in dimensions three and higher, semi-algebraic sets need not possess such a canonical CAD.

Since CAD algorithms typically take as input a finite family of multivariate polynomials and produce a decomposition adapted to all algebraic varieties they define, it is natural to ask whether such finite families always admit a minimum CAD. In other words, when restricting attention to algebraic (rather than arbitrary semi-algebraic) sets, can we expect the existence of a canonical optimal decomposition?

In this paper, we give a positive answer in dimension three and obtain a stronger statement. We prove in Theorem \ref{thrm:main-curtained} that every finite family of closed and curtained (defined below) semi-algebraic sets in $\mathbb{R}^3$ admits a minimum CAD. This extends the existence results of $\mathbb{R}$ and $\mathbb{R}^2$ and shows that the non-existence phenomena identified in $\mathbb{R}^3$ cannot occur under these  topological/algebraic/geometric constraints. As a consequence, we obtain the following main result of the paper. 

\begin{theorem}\label{thrm:main}
    Every finite family of algebraic sets in the space $\R^3$ admits a minimum CAD.
\end{theorem}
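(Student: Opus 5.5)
The plan is to derive Theorem~\ref{thrm:main} from the more general Theorem~\ref{thrm:main-curtained}. That theorem asserts that every finite family of closed and curtained semi-algebraic subsets of $\R^3$ admits a minimum CAD. It therefore suffices to show that every algebraic set $Z = \{x \in \R^3 : p_1(x)=\dots=p_k(x)=0\}$ is closed and curtained. Given a finite family $Z_1,\dots,Z_m$ of algebraic sets, we then apply Theorem~\ref{thrm:main-curtained} to it directly. We may also replace each $Z_i$ by the zero set of a single polynomial $p_i=\sum_j p_{ij}^2$, which changes nothing.

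Closedness is immediate, since $Z$ is the preimage of $\{0\}$ under a continuous map. The substance lies in the curtained condition. Its exact definition appears later in the paper, but I expect it to say roughly the following. Whenever the set contains a vertical segment, or more generally a piece of a vertical line over some point of $\R^2$, and this behaviour persists over a curve or region of the base, the set must contain the whole vertical "curtain" (the full fibre line) above that locus. Alternatively, the condition may say that a fibre is either finite or the whole line $\R$.

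For algebraic sets I would argue fibrewise. Fix $(a,b)\in\R^2$ and consider $p(a,b,z)\in\R[z]$. Either this polynomial is identically zero, in which case the fibre $Z\cap(\{(a,b)\}\times\R)$ is the full line, or it is nonzero, in which case the fibre is finite. So no algebraic set has a fibre that is a nontrivial interval or half-line. The set $V=\{(a,b): p(a,b,z)\equiv 0\}$ is the common zero set in $\R^2$ of the coefficients of $p$ in $z$. Hence $V$ is itself algebraic and $V\times\R\subseteq Z$, which is exactly a curtain over an algebraic base.

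Any further requirement in the definition, for instance one on the projection to $\R^2$ or on the restriction of the set to the cylinders over the cells of a CAD of $\R^2$, should follow in the same way. Polynomial identities restrict to affine lines and planes, so subsets of fibres of the form $\{x\}\times\R$ or $\{x\}\times\R^2$ are again algebraic in the restricted coordinates. I would prove these facts as a short lemma and then invoke Theorem~\ref{thrm:main-curtained}.

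The main obstacle is matching the precise formulation of "curtained". If the definition is stated for vertical pieces over two-dimensional subsets of the base, or over sets defined via sections of the decomposition, I would need the local version of the argument. Suppose $Z$ contains a vertical segment over every point of a semi-algebraic curve $\gamma\subset\R^2$. By the dichotomy above, each such fibre is then the whole line, so $\gamma\subseteq V$. The identity theorem for polynomials in one variable, applied fibrewise, does all the work here. I expect this to be routine once the definition is pinned down.
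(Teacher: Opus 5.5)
Your proposal is correct and is the paper's argument: Theorem~\ref{thrm:main} follows from Theorem~\ref{thrm:main-curtained} because every algebraic set is semi-algebraic, closed and curtained (Proposition~\ref{prop:curtained-algebraic}), and the paper proves curtainedness by your fibrewise observation that a univariate polynomial with infinitely many real roots is identically zero. The paper defines ``curtained'' exactly as the dichotomy you prove (each vertical fibre is finite or the whole line), so the extra requirements you speculated about are not needed.
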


The only criterion currently available for showing that a given finite family $\Fc$ admits a minimum CAD is Theorem~6.3 of \cite{miniCAD-extended}. However, its application depends heavily on $\Fc$, which prevents its use in our setting.
In this paper, we instead provide a proof of Theorem~\ref{thrm:main} that is independent of Theorem~6.3 of \cite{miniCAD-extended}, relying on some confluence property and topological arguments.
%This result provides the first positive existence theorem for minimum CADs in dimension three, thereby giving a partial answer to questions left open in \cite{miniCAD} (\textcolor{red}{Long Version too}). 

After recalling relevant material from \cite{miniCAD,miniCAD-extended} in Section \ref{sec:background}, we present in Section~\ref{sec:technical} several preliminary technical result, which are then used in the proof of the main theorem given in Section~\ref{sec:proof}.
%The proof combines topological arguments with techniques developed in \cite{miniCAD} (\textcolor{red}{Long Version too}).

\section{Background and Notation}\label{sec:background}

We use the notation of \cite{exotic}, which we reproduce here in Subsection~\ref{sub:CAD} to be self-contained.
%In Subsection \ref{sub:CAD}, we fix the notation as in \cite{exotic}, which we reproduce here to be self-contained. 
Then, we recall in the next subsection the main ingredients of \cite{miniCAD,miniCAD-extended} for studying the existence of minimum CADs. We refer the reader to those paper for more details.

\subsection{Cylindrical Algebraic Decompositions}\label{sub:CAD}

For every positive integer $n \in \mathbb{N}^*$, we consider CADs of $\mathbb{R}^n$ with respect to a fixed variable ordering. Since the cells of such CADs are defined inductively from cells of CADs of $\mathbb{R}^k$ ($k \leq n$) which are indexed by $k$-tuples, it is convenient to adopt the tuple notation of \cite{miniCAD}.
More precisely, we identify any $k$-tuple $I=(i_1,\ldots,i_k)\in\mathbb{N}^k$ with the corresponding word $i_1\ldots i_k$, and we set $|I| = k$ to be its length. We say that $I$ is odd (resp. even) if $i_k$ is odd (resp. even). We denote by $\varepsilon$ the empty tuple, which corresponds to the empty word. For $j\in \mathbb{N}$, we denote by $I:j$ the $k+1$ tuple $(i_1,\ldots,i_k,j)$.

We also use the $\odot$ shorthand of \cite{binyamini2019complex} to deal easily with sectors and sections, defined below. For every subset $S \subseteq \R^n$, we denote (following the notation of \cite{bochnaketal1998}) by $\mathcal{S}^0(S)$ the ring of continuous semi-algebraic functions from $S$ to $\R$.
For $f \in \mathcal{S}^0(S)$, the section $S \odot \{f\}$ with base $S$ and bound $f$ is simply the graph of $f$, that is
    $$S \odot \{f\} = \{(\textbf{x},y)\in\R^{n+1} \; | \; \textbf{x} \in S,  y = f(\textbf{x})\}.$$
For $l \in \mathcal{S}^0(S) \cup \{-\infty\}$ and $u \in \mathcal{S}^0(S) \cup \{+ \infty\}$ such that $l < u$ on $S$, the sector $S \odot (l,u)$ with base $S$, lower bound $l$ and upper bound $u$, is the set
    $$S \odot (l,u) = \{(\textbf{x},y)\in\R^{n+1} \; | \; \textbf{x} \in S, l(\textbf{x}) < y < u(\textbf{x})\}.$$ 
These definitions naturally extend to $n = 0$, where $\R^0 = \{0\}$. For instance, a real algebraic number $\xi$ is identified with the constant semi-algebraic function $0 \mapsto \xi$ and we set $\{\xi\} = \R^0 \odot \{\xi\}$.

%\lucas{A-t-on besoin de ceci ?}Similarly to \cite{DLSregular}, we denote by $\overline{S}$ the closure of $S$ in the Euclidean topology, and by $\partial S = \overline{S} \setminus S$ its (cell) boundary.  

\begin{definition}\label{def:cad}
    A cylindrical algebraic decomposition ($\CAD$) of $\mathbb{R}^n$ is a sequence $\mathscr{C} = (\mathscr{C}_1,\ldots,\mathscr{C}_n)$ such that  
    for all $k \in \{1,\ldots,n\}$, the set  $$\mathscr{C}_k= \left\{C_{i_1 \cdots i_k} |  \forall j \in \{1,\ldots,k\}, i_j \in \{1,\ldots,2u_{i_1\cdots i_{j-1}} +1\}\right\}$$
     is a finite semi-algebraic partition of $\mathbb{R}^k$  defined inductively by the following data:
    \begin{enumerate}%[leftmargin=0.35cm]
        \item $\Cr_0 = \{C_\varepsilon\}$ where $C_\varepsilon = \{0\}$;
        \item for each cell $C_I \in \mathscr{C}_k$ ($k < n$), there exist a natural number $u_I \in \mathbb{N}$ and $\xi_{I: 2 },\xi_{I:4},\ldots,\xi_{I:2 u_I} \in \mathcal{S}^0(C_I)$ (possibly none if $u_I = 0$) with
         $\xi_{I: 2 } < \xi_{I:4} < \ldots <\xi_{I:2 u_I}$ on $C_I$, that define exactly all cells of $\mathscr{C}_{k+1}$ by
        {\small\begin{align*}
             C_{I:2j} &= C_I \odot \{\xi_{I:2j}\} , \quad &(1 \leq j \leq u_I)\\
             C_{I:2j+1} &= C_I \odot (\xi_{I:2j},\xi_{I:2(j+1)}),\quad &(0 \leq j \leq u_I)
         \end{align*}}
         with the convention that $\xi_{I:0} = -\infty$ and $\xi_{I:2u_I+ 2} = +\infty$.
    \end{enumerate}
 We say that the element $C_{I}$ of $\mathscr{C}_k$ is a CAD cell of index $I$.  
\end{definition}

For all $k \in \{0, \ldots, n\}$, we denote by $\pi_k : \R^n \to \R^{k}$ the projection onto the first $k$ coordinates, defined by $\pi_k(x_1, \ldots, x_n) = (x_1, \ldots, x_k)$, and use the same notation for its natural successive extensions to subsets of $\R^n$ and to families of subsets of $\R^n$.
In particular, for every $\CAD$ $\Cr$ of $\R^n$, we have $\pi_k(\Cr_n) = \Cr_k$ for every $ k\in \{0, \ldots, n\}$. For this reason, we usually identify $\Cr$ with $\Cr_n$ when the context is~clear.

In general, one can refer to a CAD cell $C$ of $\R^n$ without referring to an ambient CAD, nor its index. The unique CAD cell of $\R^0$ is $\{0\}$ and if $n \geqslant 1$, then $\pi_{n-1}(C)$ is a CAD cell of $\R^{n-1}$, and $C$ is a section or a sector with base $\pi_{n-1}(C)$ as defined above.

\subsection{Minimal and minimum CADs}\label{sub:miniCAD}

Following \cite{miniCAD, miniCAD-extended}, we say that a CAD $\mathscr{C}$ is adapted to a given semi-algebraic set $S$ of $\R^n$ if $S$ is a union of cells of $\mathscr{C}$. We denote by $\text{CAD} (S)$ the set of all CADs adapted to $S$. If $\mathcal{F} = (S_1, \ldots, S_p)$ is a finite family of semi-algebraic sets of $\R^n$, then we consider the set $$\CAD(\Fc) = \bigcap_{i=1}^p\CAD(S_i)$$
of CADs adapted to $\Fc$.

We use the refinement order defined on the set of partitions to compare CADs. More precisely, if $\mathscr{C}$ and $\mathscr{C}'$ are two CADs of $\mathbb{R}^n$, then we say that $\mathscr{C}$ is smaller than or equal to $\mathscr{C}'$ if every cell of $\mathscr{C}$ is a union of cells of $\mathscr{C}'$. In this case, we write $\mathscr{C} \preceq \mathscr{C}'$ or $\mathscr{C}' \succeq \mathscr{C}$.

\begin{definition}
        A minimal CAD adapted to $\Fc$ is a minimal element of the poset ($\text{CAD}(\Fc),\preceq$). 
        % In other words, $\mathscr{C} \in \text{CAD}(S)$ is a minimal CAD adapted to $S$ if 
        % \[\forall \mathscr{C}' \in \text{CAD}(S), (\mathscr{C}'\preceq \mathscr{C})  \implies (\mathscr{C}' =  \mathscr{C}).\] 
        A minimum CAD adapted to $\Fc$ is a minimum element of ($\text{CAD}(\Fc),\preceq$). 
    %     Namely,  $\mathscr{C} \in \text{CAD}(S)$ is a minimum CAD adapted to $S$ if 
    % \[\forall \mathscr{C}' \in \text{CAD}(S), \mathscr{C}'\succeq \mathscr{C}.\]     
        We say that $\Fc$ admits a minimum CAD if there exists a minimum CAD adapted to $\Fc$.
    \end{definition}

%\subsection{Rewritings of CADs}

To study and compute minimal and minimum CADs, it is convenient to work with CAD reductions, which we recall in Definition~\ref{def:CADRed} (see Section 5 of \cite{miniCAD-extended} for a detailed presentation).

Roughly speaking, a reduction $\Phi_A$ (defined below) of a CAD $\Cr$ of $\R^n$ consists in removing a section $C_A \in \Cr_k$ (where $k \in \{1, \ldots, n\}$). To do so, one merges the section $C_A$ with the sectors immediately below and above of $\Cr_k$, namely $C_{A - e_k}$ and $C_{A + e_k}$, and then merges the corresponding cells of $\Cr_{k+1}, \ldots, \Cr_n$ in the cylinders lying above.
In order for this operation to yield a valid CAD, one may first argue at a combinatorial level. For instance, one checks that the number of cells in $\Cr_{k+1}$ projecting onto $C_A$ coincides with the number of cells in $\Cr_{k+1}$ projecting onto each of $C_{A \pm e_k}$. 
To this end, we encode the relevant combinatorial structure of a CAD as a tree.
\begin{definition}\label{def:treeCAD}
    For every $\mathscr{C} \in \text{CAD}(\Fc)$, the CAD tree associated with $(\mathscr{C}, \Fc)$ is the pair $(T,L)$ where $T$ is the prefix tree of depth $n$ whose nodes are the indices of the cells of $\mathscr{C}$, and where $L$ is a map defined recursively on $T$ from the leaves to the root as follows: if $I$ is a leaf, then
    $L(I) \in \{0,1\}^p$, with the $i^\text{th}$ component equals to 1 if and only if $C_I \subseteq S_i$; if $I$ is not a leaf, $L(I) = \big(L(I:1), \ldots, L(I:2u_I+1)\big)$. We denote this pair by Tree$(\mathscr{C} ,\Fc)$.
\end{definition}

If $C_I \in \Cr_k$ with $k < n$, then $L(I)$ encodes the arrangement of the cells of $\Cr_{k+1}, \ldots,\Cr_{n}$ that lie in a cylinder projecting onto $C_I$. For instance, the length of $L(I)$ equals the number of cells in $C_{k+1}$ contained in the cylinder $C_I \times \R$. 

If the cells in the cylinders above $C_{A-e_k}$, $C_A$, and $C_{A+e_k}$ can be merged to obtain a coarser CAD $\Cr'$, then some cells of $\Cr$ are merged and the indices of some other are modified.
The relabelling map $\psi_A$, defined below, sends the index $I$ of a CAD cell $C_I \in \Cr_l$ (with $l \in \{1, \ldots, n\}$) before merging to the index of the corresponding cell in $\Cr'_l$ after merging. The auxiliary sets $S_A$ and $N_A$ appearing in the definition consist precisely of the indices of the cells affected by this relabelling.

\begin{definition}\label{def:auxpsi} 
For every $k \in \N^*$, we define the $k^\text{th}$ prefix map by
    \[p_{k} \colon \bigcup_{l=0}^{+\infty}(\mathbb{N}^*)^l \to \bigcup_{l=0}^{+\infty}(\mathbb{N}^*)^l \colon (i_1,\ldots,i_l) \mapsto \begin{cases}
        (i_1,\ldots,i_l)&\text{ if $l < k$},\\
        (i_1,\ldots,i_k)&\text{ if $l\geq k$}.
       \end{cases}
    \]
    For every tuple $A \in (\mathbb{N}^*)^k$ ($k \geq 1$),  we define the sets%\footnote{\textcolor{red}{Intuitively, $S_A$ stands for the sons of $A$ (more precisely, $A$ and its descendants), $N_A$ stands for some nephews of $A$ (the older siblings of $A$ and their descendants) and $F_A$ for the rest of the family. }}
    \begin{align*}
        S_A &= \left\{I \in \bigcup_{l=0}^{+\infty}(\mathbb{N}^*)^l : p_k(I) = A\right\},\\
        N_A &= \left\{I \in \bigcup_{l=0}^{+\infty}(\mathbb{N}^*)^l : \exists m \in \N^* :  p_k(I) = A + me_k\right\},\\
        F_A &= \bigcup_{l=0}^{+\infty}(\mathbb{N}^*)^l \setminus \left(S_A \cup N_A\right).
    \end{align*}
    where $e_k$ is the $k^\text{th}$ unit vector of $\mathbb{R}^l$ for $l \geq k$. If $A$ is even, we also define the relabelling map
    {\begin{align*}
        \psi_A \colon \bigcup_{l=0}^{+\infty}(\mathbb{N}^*)^l \to \bigcup_{l=0}^{+\infty}(\mathbb{N}^*)^l \colon I \mapsto \begin{cases}
            I - e_k &\text{ if } I \in S_A,\\
            I - 2 e_k &\text{ if } I \in N_A,\\
            I &\text{ if } I \in F_A.
        \end{cases}
    \end{align*}}
\end{definition}

\begin{definition}\label{def:CADtreeRed}
    Consider a CAD tree $\mathcal{T} = (T,L)$ and an even node $A \in T \cap (\mathbb{N}^*)^k$ ($k \geq 1$). We say that $\psi_A$ induces a reduction rule on $\mathcal{T}$ if we have
    \begin{equation}\label{eqn:L}
             L(A - e_k) = L(A) = L(A + e_k).
         \end{equation}
 Then the induced reduction rule is denoted by $\Psi_{A}$ and the reduced CAD tree is given by $\mathcal{T}' =  (\psi_{A}(T), L')$ where $L'$ is defined on the leaves of $\psi_A(T)$ by $L'(\psi_A(I))=L(I)$ for every leaf $I$ of $T$. In this case, we write $\mathcal{T} \to \mathcal{T'}$ or $\mathcal{T}' \leftarrow \mathcal{T}$.
 \end{definition}

 \begin{definition}\label{def:CADRed}
    Let $\Cr \in \CAD(\Fc), \Tree(\Cr, \Fc) = (T,L)$ and $A$ be an even node of $T$. We suppose that $\Psi_A$ is a tree reduction rule defined from $\Tree(\Cr,\Fc)$ to $\mathcal{T}' =(T',L')$ and we write
    $$\mathscr{C}' = \left\{\bigcup_{I \in T \; : \; \psi_A(I) = I'} C_{I} \; \Big| \; I' \text{ leaf of } T'\right\}.$$ 
    We say that $\Psi_A$ lifts to a $\CAD(\Fc)$ reduction rule $\Phi_A$ defined on $\mathscr{C}$ if $\Cr'$ is a CAD (adapted to $\Fc$). In this case, we say that $\Cr'$ is the reduced CAD and we write $\mathscr{C} \to \mathscr{C}'$ or $\mathscr{C}' \leftarrow \mathscr{C}$.
\end{definition}

We denote by $\stackrel{*}{\leftarrow}$ the reflexive and transitive closure of $\leftarrow$ on $\CAD(\Fc)$.
Theorem 5.9 of \cite{miniCAD-extended} asserts that the relation $\stackrel{*}{\leftarrow}$ is equal to $\preceq$ on $\CAD(\Fc)$.

\section{Technical results}\label{sec:technical}
% We now develop all the machinery needed to carry out the proof of the main result. We still consider $\Fc = (S_1, \ldots, S_p)$ a finite family of semi-algebraic sets of $\R^n$ ($n \in \N^*$). 
% \lucas[inline]{The idea of Theorem \ref{thrm:main-curtained} is to impose additional assumptions on the sets defining $\mathcal{F}$ to ensure the existence of a minimum CAD.
% In Subsection \ref{sub:CADnirr}, we provide a characterization of the existence of a minimum CAD based on the confluence of a strict poset of $\CAD(\Fc)$.}
In this section, we develop the machinery required for the proof of the main result (Section \ref{sec:proof}). We still consider a finite family $\Fc = (S_1, \ldots, S_p)$ of semi-algebraic subsets of $\R^n$, with $n \in \N^*$.

%The main idea underlying Theorem~\ref{thrm:main-curtained} is to impose additional assumptions on the sets defining $\Fc$ in order to ensure the existence of a minimum CAD. 

In Subsection \ref{sub:curtained}, we identify two properties (closedness and curtainedness) that fail in the counterexamples of \cite{miniCAD}. We will show in Section \ref{sec:proof} that these properties are actually sufficient to guarantee the existence of a minimum adapted CAD (when $n = 3$).
In Subsection~\ref{sub:CADnirr}, we provide a characterization of this existence in terms of the confluence (a property based on $\CAD(\Fc)$ reductions) of a strict subset of $\CAD(\Fc)$.
The remaining subsections provide the technical ingredients needed for the proof of the main result based on that characterization. In Subsection~\ref{sub:liftingTreeRed}, we give a criterion ensuring that a tree reduction lifts to a CAD reduction, based on the continuity of certain functions defined by unions. In Subsection~\ref{sub:pasting}, we establish conditions guaranteeing the continuity of such functions. Finally, in Subsection~\ref{sec:lbc}, we prove some topological properties of CAD cells of the plane, which are then used to verify that these continuity conditions are satisfied in Section~\ref{sec:proof} under the closedness and curtainedness assumptions.

\subsection{Closed and curtained sets}\label{sub:curtained}
We motivate the assumptions of Theorem \ref{thrm:main-curtained} by observing some properties that are not satisfied in the counterexamples of \cite{miniCAD}. First, the Trousers $\mathbb{T}$ (see Definition~4.3 of \cite{miniCAD}) is not closed, and therefore cannot be expressed as a finite union of sections of a continuous function (see the Closed Graph Theorem: Proposition~2.14 of \cite{rudin-FA}). Note that this explains the non-confluence of $\CAD(\mathbb{T})$ discussed in Subsection~5.3 of \cite{miniCAD}. However, closedness alone is not sufficient, as illustrated by Example~4.9 of \cite{miniCAD}, where certain closed sets intersect a vertical line $\ell = \{p\} \times \R \subset \R^3$ in infinitely many points without coinciding with $\ell$. This phenomenon is ruled out by the curtained condition introduced in Definition \ref{def:curtained} below. As we prove in Theorem~\ref{thrm:main-curtained}, assuming that all sets defining $\Fc$ are closed and curtained ensures the existence of a minimum CAD (when $n = 3$).

The geometric notion of curtains frequently appears in the study of CADs and related algorithms (see \cite{curtains} and the references therein).
%\textcolor{blue}{Note that we only consider curtains in the last direction.}
%In the context of this paper, this assumption naturally arises as a convenient additional hypothesis (together with closedness) to ensure the existence of a minimum CAD.% (irreducible at the last level).

\begin{definition}\label{def:curtained}
    Let $S$ be a semi-algebraic set of $\R^n$ and $W \subseteq \R^{n-1}$. We say that $S$ has a curtain (in the last direction) at $W$ if $W \times \R \subseteq S$. 
    
    We say that $S$ is curtained if for every $p \in \R^{n-1}$, the fibre $S \cap \left( \{p\} \times \R\right)$ is either finite, or equal to the whole line $\{p\} \times \R$ (i.e. $S$ has a curtain at $\{p\}$).
\end{definition}

\begin{proposition}\label{prop:curtained-algebraic}
    Every algebraic set $V$ in $\R^n$ is closed and curtained. 
\end{proposition}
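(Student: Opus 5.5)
Write $V = \{x \in \R^n : P_1(x) = \cdots = P_m(x) = 0\}$ with $P_1,\ldots,P_m \in \R[x_1,\ldots,x_n]$. Replacing the $P_j$ by the single polynomial $P = P_1^2 + \cdots + P_m^2$, we may assume $V = P^{-1}(0)$; this is the usual trick over the reals and simplifies both parts.

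For closedness, $P$ is a polynomial, hence continuous on $\R^n$ for the Euclidean topology. Therefore $V = P^{-1}(\{0\})$ is the preimage of a closed set and is closed.

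For curtainedness, fix $p \in \R^{n-1}$ and consider the univariate polynomial $q(y) = P(p,y) \in \R[y]$, obtained by substituting $x_1,\ldots,x_{n-1}$ by the coordinates of $p$. The fibre $V \cap (\{p\}\times\R)$ is exactly $\{p\} \times q^{-1}(0)$. Two cases arise. If $q$ is the zero polynomial, then $q(y)=0$ for every $y\in\R$, so the fibre is the whole line $\{p\}\times\R$, i.e. $V$ has a curtain at $\{p\}$. Otherwise $q$ is a nonzero polynomial of degree at most $\deg P$, and it has at most $\deg q$ real roots, so the fibre is finite. This is precisely the dichotomy required in Definition~\ref{def:curtained}.

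There is no real obstacle; the only point needing care is justifying the reduction to a single polynomial, which is immediate since a sum of real squares vanishes if and only if each term does. Alternatively one can avoid it: if some $P_j(p,\cdot)$ is a nonzero polynomial, the fibre is contained in its finite zero set; if all $P_j(p,\cdot)$ vanish identically, the fibre is the whole line. I would present this second version to sidestep even that small step. The case $n=1$ (where $\R^0 = \{0\}$) is covered by the same argument.
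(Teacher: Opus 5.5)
Your proposal is correct and follows essentially the same argument as the paper: closedness from the continuity of polynomials, and curtainedness from the fact that a nonzero univariate polynomial has finitely many roots. The paper phrases the second part contrapositively (an infinite fibre forces every $P_k(p,X_n)$ to vanish identically), which is the same as your second version, and your sum-of-squares reduction is a valid but unnecessary variant.
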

The converse is not true: roughly speaking, most of the closed and curtained semi-algebraic sets are not algebraic.
\begin{proof}
    By definition, an algebraic set $V \subseteq \R^n$ is the common zero locus of finitely many $n$-variate polynomials with real coefficients $P_1,\dots,P_K \in \R[X_1,\dots,X_n]$. Since each $P_k$ is continuous,
    their common vanishing set $V$ is closed.

    Consider now $p = (x_1,\dots,x_{n-1}) \in \R^{n-1}$ such that the fibre $V \cap \left( \{p\} \times \R \right)$ is infinite. For each $k \in \{1,\dots,K\}$, consider the univariate polynomial
    $P_k(x_1,\dots,x_{n-1},X_n) \in \R[X_n]$.
    Since it has infinitely many real roots (in $X_n$), it must be the zero polynomial. Hence
    $P_k$ vanishes identically on $\{p\} \times \R$ for every $k$, and therefore
    $\{p\} \times \R \subseteq V$. This shows that $V$ is curtained.
\end{proof}
Thus, the class of closed and curtained sets contains the class of algebraic sets. Moreover, when $n = 3$, it enjoys useful topological properties of specific interest here (see Corollary~\ref{cor:bypass} below).
%The next result concerning the closedness and curtainedness assumptions is delayed to Corollary \ref{cor:bypass}.

\subsection{CADs irreducible at the last level}\label{sub:CADnirr}
We introduce a strict subset $\CAD_{n\text{-irr}}(\Fc)$ of $\CAD(\Fc)$ that shares the same minimal elements, and in which every section of the CADs under consideration is directly related to the sets defining $\Fc$. The first property will reduce the discussion in the proof of the main theorem, while the second is crucial for exploiting the assumptions on $\Fc$, specifically in Subsection \ref{case:|A|neq|B|}.

To obtain this subset, we intuitively start from $\CAD(\Fc)$ and perform some trivial $\CAD(\Fc)$ reductions, namely those occurring at the last level.
More precisely, for every $\Cr \in \CAD(\Fc)$, we observe that every tree reduction rule $\Psi_I$ defined on $\Tree(\Cr, \Fc)$ at the last level (i.e. with $|I| = n$) lifts to a $\CAD(\Fc)$ reduction rule $\Phi_I$ defined on $\Cr$. In particular, no additional topological assumptions are required, in contrast to the case where $|I| < n$. %This observation leads us to the study of CADs adapted to $\Fc$ for which no reduction is defined at the last level.

\begin{definition}
    A CAD $\Cr \in \CAD(\Fc)$ is said to be irreducible at the last level (or $n$-irreducible for short) if there exists no $\CAD(\Fc)$ reduction $\Phi_I$ defined on $\Cr$ with $|I| = n$. We denote by $\CAD_{n\text{-irr}}(\Fc)$ the set of all $n$-irreducible CADs adapted to $\Fc$.
\end{definition}
This condition may be viewed as a generalization, to finite families of semi-algebraic sets, of the notion of reduced CAD %(not to be confused with the notion of CAD reductions of \cite{miniCAD})
that can be found in the literature in Definition 3.1.15 of \cite{locatelli}, or as part of the definition of a basis-determined CAD of~\cite[p. 34]{arnon}). 

We now gather some properties about $\CAD_{n\text{-irr}}(\Fc)$.

%The advantages of considering $n$-irreducible CADs adapted to $\Fc$ in the study of minimal and minimum CADs are at least twofold. 
%First, the sections of such CADs are directly related to the sets $S_i$ under consideration. This fact will play a crucial role in the proof of the main theorem (in Subsection \ref{case:|A|neq|B|} below).
%Second, although the inclusion $\CAD_{n\text{-irr}}(\Fc) \subsetneq \CAD(\Fc)$ is strict in general, we show that any minimal CAD adapted to $\Fc$ is necessarily irreducible at the last level.

\begin{proposition}\label{prop:minin-irr}
    A CAD $\Cr$ in $\R^n$ is a minimal element of $\CAD(\Fc)$ if and only if it is a minimal element of $\CAD_{n\text{-irr}}(\Fc)$. In particular, the poset $(\CAD(\Fc), \preceq)$ admits a minimum if and only if the poset $(\CAD_{n\text{-irr}}(\Fc), \preceq)$ does.
\end{proposition}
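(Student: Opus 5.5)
The plan is to use Theorem~5.9 of \cite{miniCAD-extended}, which identifies $\stackrel{*}{\leftarrow}$ with $\preceq$ on $\CAD(\Fc)$, together with the fact that reductions strictly decrease the number of cells. Write $\CAD_{n\text{-irr}}(\Fc) \subseteq \CAD(\Fc)$ for the subposet with the induced order.

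First, every minimal element $\Cr$ of $\CAD(\Fc)$ lies in $\CAD_{n\text{-irr}}(\Fc)$. Indeed, if some $\CAD(\Fc)$ reduction $\Phi_I$ with $|I|=n$ were defined on $\Cr$, it would produce an adapted CAD $\Cr' \preceq \Cr$. Since $\Cr'$ is obtained by merging cells, it has strictly fewer cells, so $\Cr' \neq \Cr$, contradicting minimality. A minimal element of the larger poset that belongs to the subset is trivially minimal in the subset. This gives the forward implication.

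For the converse, let $\Cr$ be minimal in $\CAD_{n\text{-irr}}(\Fc)$, and suppose $\Dr \in \CAD(\Fc)$ satisfies $\Dr \preceq \Cr$. The key step is to show that below $\Dr$ there is an $n$-irreducible CAD. Starting from $\Dr$, repeatedly apply any available $\CAD(\Fc)$ reduction $\Phi_I$ with $|I|=n$. Each step strictly decreases the finite number of cells, so the process terminates at some $\Dr' \in \CAD_{n\text{-irr}}(\Fc)$ with $\Dr \stackrel{*}{\leftarrow} \Dr'$, hence $\Dr' \preceq \Dr \preceq \Cr$ by Theorem~5.9 of \cite{miniCAD-extended}, or simply because reductions coarsen. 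Minimality of $\Cr$ in the subposet gives $\Dr' = \Cr$. Antisymmetry of the refinement order on partitions then yields $\Dr = \Cr$, so $\Cr$ is minimal in $\CAD(\Fc)$.

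This termination argument also shows that every element of $\CAD(\Fc)$ lies above some element of $\CAD_{n\text{-irr}}(\Fc)$. For the ``in particular'' statement, we argue in both directions:
\begin{itemize}
\item If $\Mr$ is a minimum of $\CAD(\Fc)$, it is minimal, hence $n$-irreducible by the first step, and it is clearly a minimum of the subposet.
\item If $\Mr$ is a minimum of $\CAD_{n\text{-irr}}(\Fc)$ and $\Dr \in \CAD(\Fc)$, choose an $n$-irreducible $\Dr' \preceq \Dr$. Then $\Mr \preceq \Dr' \preceq \Dr$, so $\Mr$ is a minimum of $\CAD(\Fc)$.
\end{itemize}

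The main point needing care is that an $n$-level reduction, when one is defined, yields a genuinely coarser adapted CAD; this holds by Definition~\ref{def:CADRed}. Nothing requires the observation that tree reductions at level $n$ always lift, although that observation would let one phrase $n$-irreducibility purely at the level of trees.
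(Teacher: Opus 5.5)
Your proof is correct and takes the same route as the paper, which derives the proposition from Lemma~\ref{lemma:thanks} (every adapted CAD lies above an $n$-irreducible one, because reductions terminate) together with elementary order-theoretic arguments. Your termination step is exactly that lemma, and your other steps (a level-$n$ reduction gives a strictly coarser adapted CAD, so minimal elements of $\CAD(\Fc)$ are $n$-irreducible; then antisymmetry; then the two directions for the minimum) are precisely the set-theoretic arguments the paper leaves implicit.
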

The proof relies on the following lemma together with elementary set-theoretic arguments.
\begin{lemma}\label{lemma:thanks}
If $\Cr \in \CAD(\Fc)$, then there exists $\Cr' \in \CAD_{n\text{-irr}}(\Fc)$ such that $\Cr' \preceq \Cr$. In particular, $\CAD_{n\text{-irr}}(\Fc)$ is never empty. 
\end{lemma}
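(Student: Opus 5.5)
The proof will be a termination argument: starting from $\Cr$, we repeatedly apply $\CAD(\Fc)$ reductions at the last level until none is available. If $\Cr$ is already $n$-irreducible, we simply take $\Cr' = \Cr$. Otherwise, by definition there is a $\CAD(\Fc)$ reduction $\Phi_I$ defined on $\Cr$ with $|I| = n$, and we obtain a reduced CAD $\Cr^{(1)} \leftarrow \Cr$ that is again adapted to $\Fc$ (this is part of Definition~\ref{def:CADRed}). We then iterate, producing a chain
$$\Cr = \Cr^{(0)} \to \Cr^{(1)} \to \Cr^{(2)} \to \cdots,$$
where each step is a $\CAD(\Fc)$ reduction at level $n$.

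First, each step is coarsening. Every cell of $\Cr^{(j+1)}$ is by construction a union of cells of $\Cr^{(j)}$, so $\Cr^{(j+1)} \preceq \Cr^{(j)}$. Alternatively, one can invoke Theorem~5.9 of \cite{miniCAD-extended}, which identifies $\stackrel{*}{\leftarrow}$ with $\preceq$. By transitivity, every CAD in the chain is $\preceq \Cr$.

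Second, the chain must terminate. A reduction $\Phi_I$ with $|I| = n$ merges the three cells $C_{I-e_n}, C_I, C_{I+e_n}$ of $\Cr_n$ into one and leaves all other cells unchanged (up to relabelling through $\psi_I$). Hence the number of cells of $\Cr_n^{(j+1)}$ is exactly two less than that of $\Cr_n^{(j)}$. Since $\Cr_n$ is a finite partition, the number of cells is a nonnegative integer that strictly decreases, so after finitely many steps we reach a CAD $\Cr'$ on which no level-$n$ $\CAD(\Fc)$ reduction is defined. That is, $\Cr' \in \CAD_{n\text{-irr}}(\Fc)$ and $\Cr' \preceq \Cr$.

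For the ``in particular'' part, it suffices to know that $\CAD(\Fc)$ is nonempty. This is the classical existence of a CAD adapted to any finite family of semi-algebraic sets (Collins' theorem, e.g.\ via a CAD adapted to the polynomials defining the $S_i$). Applying the first part to such a CAD yields an element of $\CAD_{n\text{-irr}}(\Fc)$.

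The main point needing care is the cell-count decrease. I would verify it directly from Definition~\ref{def:CADRed}: the leaves of $T'$ are the images $\psi_I$ of the leaves of $T$, and $\psi_I$ identifies exactly $I-e_n$, $I$, $I+e_n$ at level $n$ while being injective elsewhere. Since $I$ is a leaf here, $S_I=\{I\}$ at that depth, and $N_I$ is shifted injectively by $-2e_n$. No topological input is needed.
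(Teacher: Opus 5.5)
Your proposal is correct and is essentially the paper's argument: the paper just asserts that only finitely many $\CAD(\Fc)$ reductions can be performed, and you make this explicit by showing that each level-$n$ reduction merges three cells of $\Cr_n$ into one (so the cell count drops by two) and that each step coarsens the CAD. You also point out that the ``in particular'' part relies on $\CAD(\Fc) \neq \emptyset$, i.e.\ the classical existence of an adapted CAD, which the paper leaves implicit.
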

\begin{proof}
    This directly follows from the fact that only finitely many $\CAD(\Fc)$ reductions can be performed.
\end{proof}

Every CAD smaller than a $n$-irreducible one is also $n$-irreducible.
\begin{lemma}\label{lemma:smallerRed}
    Let $\Cr \in \CAD(\Fc)$ and $\Dr \in \CAD_{n\text{-irr}}(\Fc)$. If $\Cr \preceq \Dr$, then $\Cr \in \CAD_{n\text{-irr}}(\Fc)$. %In particular, if $\Cr \leftarrow \Dr$, then $\Cr \in \CAD_{n\text{-irr}}(\Fc)$.
\end{lemma}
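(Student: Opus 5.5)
We argue by contraposition. Suppose $\Cr \preceq \Dr$ with $\Cr,\Dr \in \CAD(\Fc)$, and suppose $\Cr$ is not $n$-irreducible. We will produce a last-level $\CAD(\Fc)$ reduction on $\Dr$. The key observation, made just before the definition of $n$-irreducibility, is that at the last level every tree reduction rule lifts to a $\CAD(\Fc)$ reduction. So it suffices to find a tree reduction $\Psi_{K'}$ on $\Tree(\Dr,\Fc)$ with $|K'| = n$. Since $K'$ will be a leaf, this only requires three consecutive cells $D_{K'-e_n}, D_{K'}, D_{K'+e_n}$ of $\Dr$ with equal labels in $\{0,1\}^p$.

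First, fix the data on $\Cr$. Let $\Phi_I$ be a reduction on $\Cr$ with $|I| = n$, and let $J = p_{n-1}(I)$. Then $C_{I-e_n}$, $C_I$ and $C_{I+e_n}$ are the sector below, the section, and the sector above the graph of $\xi_I$ over the base $C_J$. Moreover, these three cells lie in exactly the same members of $\Fc$.

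Next, project the refinement. Since every cell of $\Cr$ is a union of cells of $\Dr$, projecting gives that $\Cr_{n-1} \preceq \Dr_{n-1}$. Choose a cell $D_K \in \Dr_{n-1}$ with $D_K \subseteq C_J$. The set $C_I \cap (D_K \times \R)$ is the graph of $\xi_I|_{D_K}$, and it is a union of cells of $\Dr$ lying over $D_K$. A sector has open intervals as fibres, so it cannot be contained in a graph. Hence only sections occur in this union. Distinct sections over $D_K$ are disjoint graphs, so exactly one section, say $D_{K:2m}$, occurs, and it equals $C_I \cap (D_K \times \R)$.

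Then identify the neighbouring cells. Consider the sector $D_{K:2m-1}$, which lies just below $D_{K:2m}$. Each point of it above $x \in D_K$ is close to $\xi_I(x)$ from below. Since $D_{K:2m-1}$ is contained in a single cell of $\Cr$ over $C_J$, and points slightly below $\xi_I(x)$ lie in $C_{I-e_n}$, we get $D_{K:2m-1} \subseteq C_{I-e_n}$. Symmetrically, $D_{K:2m+1} \subseteq C_{I+e_n}$. The only step needing care is making this "slightly below" argument precise. It suffices to pick $x \in D_K$ and a point of the fibre of $D_{K:2m-1}$ over $x$ between $\max(\xi_{I-2e_n}(x),\, \xi_{K:2m-2}(x))$ and $\xi_I(x)$, with the conventions $\pm\infty$ for missing bounds.

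Finally, conclude. Each of the three cells $D_{K:2m-1}, D_{K:2m}, D_{K:2m+1}$ is contained in a cell of $\Cr$ with the common label, so $L(K:2m-1)=L(K:2m)=L(K:2m+1)$ in $\Tree(\Dr,\Fc)$. Thus $\Psi_{K:2m}$ is a tree reduction rule at the last level. It lifts to a $\CAD(\Fc)$ reduction $\Phi_{K:2m}$ on $\Dr$, which contradicts $\Dr \in \CAD_{n\text{-irr}}(\Fc)$. The main obstacle is the identification of the section $D_{K:2m}$ and of its neighbouring sectors, which relies on the graph and fibre argument above.
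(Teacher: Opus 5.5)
Your proof is correct and follows the same route as the paper: you assume a last-level reduction $\Phi_I$ on $\Cr$, take a section of $\Dr$ contained in the section $C_I$, and show that the reduction at that section is defined on $\Dr$, contradicting $n$-irreducibility. The paper only calls this last step ``straightforward'', whereas you supply its details: the uniqueness of the section of $\Dr$ over a base cell $D_K \subseteq C_J$, the neighbouring sectors lying in $C_{I\pm e_n}$, the equality of labels, and the automatic lifting of tree reductions at the last level.
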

\begin{proof}
    %For the first item, we set $\Cr'$ to be a minimal element of CAD$(\Fc)$ smaller than or equal to $\Cr$ (see Proposition 3.1 of \cite{miniCAD-extended}). Such $\Cr'$ is $n$-irreducible, since otherwise it is not minimal.
    We suppose for the sake of a contradiction, that $\Cr$ is not $n$-irreducible. In this case, there exists a CAD$(\Fc)$ reduction $\Phi_I$ defined on $\Cr$ with $|I| = n$. 
    By assumption, $C_I \in \Cr_n$ is a section that is a cell, or a union of cells, of $\Dr_n$, which are necessarily sections. Let $D_J \in \Dr_n$ be one of them. It is straightforward to show that $\Phi_J$ is a CAD$(\Fc)$ reduction defined on $\Dr$ (at the last level since $|J| = n$), which is a contradiction.
\end{proof}

We obtain the following result, which is similar to Theorem 5.9 of \cite{miniCAD}, but in the $n$-irreducible framework. Note that Lemma~\ref{lemma:smallerRed} guarantees that if $\Cr, \Dr \in \CAD_{n\text{-irr}}(\Fc)$, then we have $\Cr \stackrel{*}{\leftarrow} \Dr$ if and only if either $\Cr = \Dr$, or there exists $\Er \in \CAD_{n\text{-irr}}(\Fc)$ such that $\Cr \stackrel{*}{\leftarrow} \Er \leftarrow \Dr$ (i.e. the chain of reduction stays in $\CAD_{n\text{-irr}}(\Fc)$).
\begin{lemma}\label{lemma:thrm59Red}
    Let $\Cr, \Dr \in \CAD_{n-\text{irr}}(\Fc)$. We have $\Cr \stackrel{*}{\leftarrow} \Dr$ if and only if $\Cr \preceq \Dr$.
\end{lemma}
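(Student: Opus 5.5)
This will follow directly from Theorem~5.9 of \cite{miniCAD-extended} combined with Lemma~\ref{lemma:smallerRed}. Since $\CAD_{n\text{-irr}}(\Fc) \subseteq \CAD(\Fc)$ and the relation $\stackrel{*}{\leftarrow}$ here is the reflexive--transitive closure of $\leftarrow$ on $\CAD(\Fc)$, the equivalence for $\Cr, \Dr \in \CAD_{n\text{-irr}}(\Fc)$ is the specialisation of that theorem. The real content to check is the refinement stated before the lemma: any witnessing chain of reductions can be taken entirely inside $\CAD_{n\text{-irr}}(\Fc)$. I would prove the lemma in this stronger form.

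For the direction ``$\Rightarrow$'', I will argue by induction on the length of a chain $\Cr = \Er_0 \leftarrow \Er_1 \leftarrow \cdots \leftarrow \Er_m = \Dr$. Each reduction $\Er_{i} \leftarrow \Er_{i+1}$ merges cells of $\Er_{i+1}$ into unions. By Definition~\ref{def:CADRed}, every cell of $\Er_i$ is a union of cells of $\Er_{i+1}$, so $\Er_i \preceq \Er_{i+1}$. Transitivity of $\preceq$, which is the refinement order on partitions, then gives $\Cr \preceq \Dr$.

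For the direction ``$\Leftarrow$'', suppose $\Cr \preceq \Dr$ with $\Cr \neq \Dr$. By Theorem~5.9 of \cite{miniCAD-extended}, there is a chain of $\CAD(\Fc)$ reductions $\Cr = \Er_0 \leftarrow \cdots \leftarrow \Er_m = \Dr$. By the forward direction, each intermediate CAD satisfies $\Er_i \preceq \Dr$. Since $\Dr \in \CAD_{n\text{-irr}}(\Fc)$, Lemma~\ref{lemma:smallerRed} shows that every $\Er_i$ lies in $\CAD_{n\text{-irr}}(\Fc)$. Hence the chain stays inside $\CAD_{n\text{-irr}}(\Fc)$, and in particular $\Cr \stackrel{*}{\leftarrow} \Er_{m-1} \leftarrow \Dr$ with $\Er_{m-1} \in \CAD_{n\text{-irr}}(\Fc)$.

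There is no real obstacle: the only step that needs any care is confirming that a single reduction yields a coarser CAD, and this is immediate from the definition of $\Cr'$ as a partition into unions of cells of $\Cr$. The heavy lifting is done by Theorem~5.9 of \cite{miniCAD-extended} and Lemma~\ref{lemma:smallerRed}.
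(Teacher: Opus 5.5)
Your proof is correct and follows essentially the same route the paper indicates. The paper invokes Theorem~5.9 of the extended version for the equivalence on all adapted CADs, and uses Lemma~\ref{lemma:smallerRed} to keep every intermediate CAD of the witnessing chain $n$-irreducible, since each is coarser than the $n$-irreducible one; your explicit check that a single reduction yields a coarser CAD is what the paper leaves implicit.
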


As in Subsection 5.3 of \cite{miniCAD}, the reduction system $\CAD_{n-\text{irr}}(\Fc)$ is said to be confluent if for every $\mathscr{C}, \mathscr{D},  \mathscr{E}  \in \CAD_{n-\text{irr}}(\Fc)$, we have
% \begin{align*}
%     \left(\mathscr{C} \stackrel{*}{\leftarrow}  \overline{\mathscr{C}} \stackrel{*}{\to} \mathscr{C}'\right) 
%     \implies \left(\exists \underline{\mathscr{C}} \in \CAD_{n-\text{irr}}(\Fc) : \mathscr{C} \stackrel{*}{\to} \underline{\mathscr{C}} \stackrel{*}{\leftarrow} \mathscr{C}'\right)
% \end{align*}
\begin{align*}
    \left(\mathscr{D} \stackrel{*}{\leftarrow} \mathscr{C} \stackrel{*}{\to} \mathscr{E}\right) 
    \implies \left(\exists \mathscr{F} \in \CAD_{n-\text{irr}}(\Fc) : \mathscr{D} \stackrel{*}{\to} \mathscr{F} \stackrel{*}{\leftarrow} \mathscr{E}\right).
\end{align*}
Since there exists no infinite chain of reductions in $\CAD_{n-\text{irr}}(\Fc)$, we can show that $\CAD_{n-\text{irr}}(\Fc)$ is confluent if and only if for every $\mathscr{D}, \mathscr{C}, \mathscr{E}  \in \CAD_{n-\text{irr}}(\Fc)$, we have
\begin{align*}
    \left(\mathscr{D} \leftarrow  \mathscr{C} \to \mathscr{E}\right) 
    \implies \left(\exists \mathscr{F} \in \CAD_{n-\text{irr}}(\Fc) : \mathscr{D} \stackrel{*}{\to} \mathscr{F} \stackrel{*}{\leftarrow} \mathscr{E}\right).
\end{align*}
The following result, reminiscent to Theorem 5.16 of \cite{miniCAD-extended}, plays a crucial role in the proof of the main theorem.

% $\mathscr{C}, \overline{\mathscr{C}}, \mathscr{C}'  \in \CAD_{n-\text{irr}}(\Fc)$ such that $\mathscr{C} \stackrel{*}{\leftarrow} \overline{\mathscr{C}} \stackrel{*}{\to} \mathscr{C}'$, there exists $\underline{\mathscr{C}} \in \CAD_{n-\text{irr}}(\Fc)$ such that $\mathscr{C} \stackrel{*}{\to} \underline{\mathscr{C}} \stackrel{*}{\leftarrow} \mathscr{C}'$. Since there exists no infinite chain of reductions in $\CAD_{n-\text{irr}}(\Fc)$, we can show that $\CAD_{n-\text{irr}}(\Fc)$ is globally confluent if and only if if for all $\mathscr{C}, \overline{\mathscr{C}}, \mathscr{C}'  \in \CAD_{n-\text{irr}}(\Fc)$ such that $\mathscr{C} \stackrel{}{\leftarrow} \overline{\mathscr{C}} \stackrel{}{\to} \mathscr{C}'$, there exists $\underline{\mathscr{C}} \in \CAD_{n-\text{irr}}(\Fc) : \mathscr{C} \stackrel{*}{\to} \underline{\mathscr{C}} \stackrel{*}{\leftarrow} \mathscr{C}'$.
 
\begin{proposition}\label{prop:conflRed}
    The poset $(\CAD(\Fc), \preceq)$ admits a minimum if and only if the abstract reduction system $(\CAD_{n-\text{irr}}(\Fc), \leftarrow)$ is confluent.
\end{proposition}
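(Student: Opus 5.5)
By Proposition~\ref{prop:minin-irr}, $(\CAD(\Fc),\preceq)$ admits a minimum if and only if $(\CAD_{n\text{-irr}}(\Fc),\preceq)$ does. So I will prove the equivalence for the poset $(\CAD_{n\text{-irr}}(\Fc),\preceq)$. Lemma~\ref{lemma:thrm59Red} lets me move freely between the order $\preceq$ and the reachability relation $\stackrel{*}{\leftarrow}$ on this set.

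\emph{Minimum $\Rightarrow$ confluence.} Let $\Mr$ be the minimum of $\CAD_{n\text{-irr}}(\Fc)$, and suppose $\mathscr{D} \stackrel{*}{\leftarrow} \mathscr{C} \stackrel{*}{\to} \mathscr{E}$ with all three in $\CAD_{n\text{-irr}}(\Fc)$.
\begin{itemize}
\item Minimality gives $\Mr \preceq \mathscr{D}$ and $\Mr \preceq \mathscr{E}$.
\item Lemma~\ref{lemma:thrm59Red} converts these into $\mathscr{D} \stackrel{*}{\to} \Mr \stackrel{*}{\leftarrow} \mathscr{E}$.
\end{itemize}
Taking $\mathscr{F} = \Mr$ gives confluence.

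\emph{Confluence $\Rightarrow$ minimum.} Assume $(\CAD_{n\text{-irr}}(\Fc),\leftarrow)$ is confluent.
\begin{itemize}
\item \textbf{Termination.} There is no infinite chain of reductions: each reduction strictly decreases the (finite) number of cells. Hence every element reduces to at least one normal form, i.e. an element on which no reduction is defined.
\item \textbf{Normal forms are minimal.} An irreducible element is minimal in $\CAD_{n\text{-irr}}(\Fc)$. Indeed, if $\Cr' \preceq \Cr$ with $\Cr'\neq\Cr$, Lemma~\ref{lemma:thrm59Red} produces a reduction step out of $\Cr$. By Lemma~\ref{lemma:smallerRed}, the whole chain stays inside $\CAD_{n\text{-irr}}(\Fc)$.
\item \textbf{Uniqueness of the normal form.} Let $\Mr_1,\Mr_2$ be two minimal elements. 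I need a common upper bound in $\CAD_{n\text{-irr}}(\Fc)$, i.e. some $\mathscr{C}$ with $\Mr_1 \stackrel{*}{\leftarrow}\mathscr{C}\stackrel{*}{\to}\Mr_2$. Confluence then gives $\mathscr{F}$ with $\Mr_i\stackrel{*}{\to}\mathscr{F}$, and irreducibility forces $\Mr_1=\mathscr{F}=\Mr_2$.
\end{itemize}
The common upper bound comes from a common refinement. Take the CAD of $\R^n$ adapted to the finite family consisting of $\Fc$ together with all cells of $\Mr_1$ and $\Mr_2$; it exists by standard CAD theory. This CAD is adapted to $\Fc$ and refines both $\Mr_1$ and $\Mr_2$, but it need not be $n$-irreducible.

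I expect this to be the main obstacle: getting a common upper bound that lies inside $\CAD_{n\text{-irr}}(\Fc)$. Lemma~\ref{lemma:smallerRed} only propagates irreducibility downward, so the refinement cannot simply be replaced by an irreducible CAD below it, which might no longer lie above $\Mr_1$ and $\Mr_2$.

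The way around it is to use confluence only on the image. Write $\mathscr{G}$ for the common refinement, and reduce it by $\CAD(\Fc)$ reductions toward $\Mr_1$ and toward $\Mr_2$, which is possible by Theorem~5.9 of \cite{miniCAD-extended}.
\begin{itemize}
\item \textbf{Step 1.} Perform all possible last-level reductions of $\mathscr{G}$ first. By the remark preceding the definition of $n$-irreducibility, these always lift to $\CAD(\Fc)$ reductions.
\item \textbf{Step 2.} Check that they can be chosen compatibly with both targets. The sections at level $n$ of $\Mr_i$ cannot be removed, since $\Mr_i$ is $n$-irreducible. So one can keep exactly the last-level sections of $\mathscr{G}$ needed for $\Mr_1$ or $\Mr_2$.
\end{itemize}
Concretely, the common upper bound is the CAD $\mathscr{G}'$ obtained from $\mathscr{G}$ by deleting in each last-level cylinder every section that is neither forced by $\Fc$ nor a section of $\Mr_1$ or $\Mr_2$. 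It remains to verify two properties, both by comparing labels $L$ in the trees: first, that $\mathscr{G}'\succeq \Mr_1,\Mr_2$; second, that $\mathscr{G}'$ is $n$-irreducible, or at least reduces within $\CAD_{n\text{-irr}}(\Fc)$ to each $\Mr_i$.

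Once uniqueness holds, a minimum follows. By Lemma~\ref{lemma:thanks}, every element of $\CAD(\Fc)$ lies above some element of $\CAD_{n\text{-irr}}(\Fc)$, which in turn lies above a normal form. Since the normal form is unique, it is below every CAD, hence it is the minimum.
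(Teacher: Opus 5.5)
Your outline follows the route the paper indicates: the analogue of Theorem~5.16 of \cite{miniCAD-extended}, via Proposition~\ref{prop:minin-irr}, Lemma~\ref{lemma:thrm59Red} and Lemma~\ref{lemma:thanks}. The direction ``minimum $\Rightarrow$ confluent'', termination, ``normal forms are minimal'' and the final step are all fine. The gap is at the one non-routine step: producing a common upper bound of $\mathscr{M}_1,\mathscr{M}_2$ that lies \emph{in} $\CAD_{n\text{-irr}}(\Fc)$. You correctly note that Lemma~\ref{lemma:thanks} does not give this by itself. It yields an $n$-irreducible CAD below the common refinement $\mathscr{G}$, with no control relative to $\mathscr{M}_1,\mathscr{M}_2$, and the paper's one-line proof does not make this point explicit either. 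But you then only define $\mathscr{G}'$ and say it ``remains to verify'' that $\mathscr{G}'\succeq\mathscr{M}_1,\mathscr{M}_2$ and that $\mathscr{G}'$ is $n$-irreducible. That verification is the whole content of the step. Your fallback ``or at least reduces within $\CAD_{n\text{-irr}}(\Fc)$ to each $\mathscr{M}_i$'' would not suffice, because confluence is only assumed for peaks that lie in $\CAD_{n\text{-irr}}(\Fc)$.

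You can close the gap without building $\mathscr{G}'$ by hand. Apply last-level reductions to $\mathscr{G}$ greedily and check that the relation $\succeq\mathscr{M}_1,\mathscr{M}_2$ survives each step.
\begin{itemize}
\item Let $G_I=G_K\odot\{\gamma_I\}$ with $|I|=n$, and let $M_J$ be the cell of $\mathscr{M}_1$ containing it. Suppose first that $M_J=M_{J'}\odot\{\mu_J\}$ is a section.
\item Then $G_K\subseteq M_{J'}$ and $\gamma_I=\mu_J$ on $G_K$. For $x\in G_K$, the points $(x,y)$ with $y$ slightly below $\gamma_I(x)$ lie in both $G_{I-e_n}$ and $M_{J-e_n}$.
\item Since $G_{I-e_n}$ is contained in a single cell of $\mathscr{M}_1$, this gives $G_{I-e_n}\subseteq M_{J-e_n}$. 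Similarly $G_{I+e_n}\subseteq M_{J+e_n}$.
\item Because $\mathscr{M}_1$ is adapted to $\Fc$, the leaf labels of $\Tree(\mathscr{G},\Fc)$ at $I-e_n,I,I+e_n$ equal those of $\Tree(\mathscr{M}_1,\Fc)$ at $J-e_n,J,J+e_n$. These are not all equal, since $\mathscr{M}_1$ is $n$-irreducible and every last-level tree reduction lifts.
\item Hence no reduction is ever defined at a piece of a last-level section of $\mathscr{M}_1$ or $\mathscr{M}_2$. This is exactly why the clause ``nor a section of $\mathscr{M}_1$ or $\mathscr{M}_2$'' in your definition of $\mathscr{G}'$ is vacuous.
\item If a reduction is defined at $I$, then $G_I$ lies in a sector of each $\mathscr{M}_i$. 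The same adjacency argument puts $G_{I\pm e_n}$ in that same sector. So the merged cell still lies in a cell of each $\mathscr{M}_i$, and levels $<n$ are untouched.
\end{itemize}
After finitely many steps you reach $\mathscr{G}^*\in\CAD_{n\text{-irr}}(\Fc)$ with $\mathscr{M}_1\preceq\mathscr{G}^*\succeq\mathscr{M}_2$. By Lemma~\ref{lemma:thrm59Red} this means $\mathscr{M}_1\stackrel{*}{\leftarrow}\mathscr{G}^*\stackrel{*}{\to}\mathscr{M}_2$, and the rest of your argument goes through. The same reasoning shows that $\CAD_{n\text{-irr}}(\Fc)$ is upward directed, which is what transferring Theorem~5.16 to the $n$-irreducible setting actually requires.
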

    Thanks to Proposition \ref{prop:minin-irr} and Lemma \ref{lemma:thanks}, the proof is analogous to that of Theorem 5.16 of \cite{miniCAD-extended} but in the $n$-irreducible framework.

\subsection{Towards practical lifting of tree reductions}\label{sub:liftingTreeRed}
%The proof of Theorem \ref{thrm:main} in Section \ref{sec:proof} relies on Proposition \ref{prop:conflRed}.  
In order to apply Proposition \ref{prop:conflRed} in Section \ref{sec:proof}, we characterize in Proposition \ref{prop:liftingReduction} below the conditions under which a tree reduction rule $\Psi_A$ lifts to a CAD reduction rule  $\Phi_A$ (see Definition \ref{def:CADRed}), in terms of the continuity of certain functions obtained via unions.

Let $f_1, f_2$ be two functions defined on $X_1, X_2 \subseteq \R^n$ respectively. Recall that if both functions agree on $X_1 \cap X_2$, then the union $f_1 \cup f_2$ is another function defined on $X_1 \cup X_2$ by $f_1 \cup f_2|_{X_i} = f_i$ for $i \in \{1,2\}$. Note that the condition is satisfied when $X_1 \cap X_2 = \emptyset$.

%Recall that, in set-theoretic terms, a function $f$ defined on $X \subseteq \R^n$ (written $f : X \to \R$) is a binary relation between $X$ and $\R$ (i.e. $f \subseteq X \times \R$) such that for every $x \in X$, there exists a unique $y \in \R$ such that $(x,y) \in f$. This unique $y$ is usually denoted $f(x)$, and the set $X$ is called the domain of $f$. If $g$ is another function defined on $X' \subseteq \R^n$, then $f \cup g$ is a binary relation between $X \cup X'$ and $\R$. %The graph of $f$ is simply the set $f$, and is denoted by $X \odot f$ here to avoid any confusion.
%The union $f \cup g$ is a function if and only if $f$ and $g$ coincide on $X \cap X'$, i.e. if for every $x \in X \cap X'$, we have $f(x) = g(x)$. In particular, if $X$ and $X'$ are disjoint, then $f \cup g$ is a function. 

We begin with two preliminary lemmas. The first one concerns the relabelling function $\psi_A$ and the sets $S_A, N_A$ and $F_A$ (see Definition \ref{def:auxpsi}). This lemma will be used repeatedly throughout the paper, for instance to analyse the sets of $\Cr'$ and the functions $\xi'_{I'}$ in the proof of Proposition \ref{prop:liftingReduction} below.

\begin{lemma}\label{lemma:psiBij}
    For every even tuple $A \in (\mathbb{N}^*)^k$ ($k \geq 1$), the restriction of $\psi_A$ to any of the sets $S_A, N_A, F_A$ is a bijection from each set to its image, given by 
    \begin{align}
    \begin{cases}
        \psi_A(S_A) = S_{A-e_k},\\\psi_A(N_A) = S_{A-e_k} \cup S_A \cup N_A, \\
        \psi_A(F_A) = F_A. 
    \end{cases}
    \end{align}
    Moreover, for every $I \in \bigcup_{l=0}^{+\infty}(\mathbb{N}^*)^l$, we have
    \begin{align*}
        (\psi_A)^{-1}(\{I\}) = \begin{cases}
            \{I, I + e_k, I + 2e_k\} \text{ if } I \in S_{A-e_k},\\
            \{I + 2e_k\} \text{ if } I \in S_{A} \cup N_A,\\
            \{I\} \text{ otherwise.}
        \end{cases}
    \end{align*}
\end{lemma}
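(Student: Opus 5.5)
The argument is a direct case analysis on the $k$-prefix $p_k(I)$ of a tuple $I$, using only Definition~\ref{def:auxpsi}. First we note that $S_A$, $N_A$, $F_A$ partition the set of all finite tuples, so $\psi_A$ is well defined. Write $A = (a_1,\ldots,a_k)$ and $A' = (a_1,\ldots,a_{k-1})$.

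\emph{Position of tuples.} A tuple $I$ of length at least $k$ lies in $S_A$ exactly when $p_k(I) = A$. It lies in $N_A$ exactly when $p_{k-1}(I) = A'$ and $i_k > a_k$. Every other tuple, including all tuples of length less than $k$, lies in $F_A$.

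\emph{Injectivity on each piece.} The maps $I \mapsto I - e_k$ and $I \mapsto I - 2e_k$ are injective, since they are translations on tuples of a fixed length and preserve length. The identity is injective. So the restriction of $\psi_A$ to each piece is a bijection onto its image, and it remains to compute the three images.

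\emph{Computing the images.} Subtracting $e_k$ changes only the $k$th entry, so $\psi_A(S_A) = S_{A-e_k}$; here $a_k \geq 2$ because $A$ is even, so $A - e_k$ has positive entries. For $N_A$, the $k$th entry $i_k$ ranges over $\{a_k+1, a_k+2, \ldots\}$, so $i_k - 2$ ranges over $\{a_k-1, a_k, a_k+1, \ldots\}$. Tuples with that $k$th entry and prefix $A'$ are exactly those of $S_{A-e_k} \cup S_A \cup N_A$, with $k$th entry $a_k-1$, $a_k$, or greater than $a_k$ respectively. The identity on $F_A$ gives $\psi_A(F_A) = F_A$.

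\emph{Fibres.} The three images are $S_{A-e_k}$, $S_{A-e_k} \cup S_A \cup N_A$ and $F_A$. They overlap only in $S_{A-e_k}$, and $F_A$ is disjoint from $S_{A-e_k} \cup S_A \cup N_A$ because $A - e_k$ has prefix $A'$ and $k$th entry $a_k - 1 < a_k$.
\begin{itemize}
\item If $I \in S_{A-e_k}$, it has exactly one preimage in $S_A$, namely $I + e_k$, and one in $N_A$, namely $I + 2e_k$. It also has the preimage $I$ itself: $I \in F_A$ because $p_k(I) = A - e_k$ is neither $A$ nor $A + me_k$ with $m \geq 1$, and $\psi_A(I) = I$. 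This gives the fibre $\{I, I+e_k, I+2e_k\}$.
\item If $I \in S_A \cup N_A$, its only preimage is $I + 2e_k \in N_A$. It has no preimage in $S_A$, since the image of $S_A$ is $S_{A-e_k}$, which is disjoint from $S_A \cup N_A$. It has no preimage in $F_A$, since $I \notin F_A$.
\item Otherwise $I \in F_A \setminus S_{A-e_k}$, and its only preimage is $I$.
\end{itemize}

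The only point requiring care is the bookkeeping: checking that $F_A$ contains $S_{A-e_k}$, and that the images of the three restrictions overlap exactly as stated. Everything else is a routine index computation.
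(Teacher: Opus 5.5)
Your proof is correct and follows the paper's argument: the partition $S_A \sqcup N_A \sqcup F_A$, the inclusion $S_{A-e_k} \subseteq F_A$, and a case split on where $I$ lies, with your explicit image computations supplying the part the paper calls straightforward. One sentence needs fixing: the claim that $F_A$ is disjoint from $S_{A-e_k} \cup S_A \cup N_A$ is false, and your first bullet uses $S_{A-e_k} \subseteq F_A$; it should say that $F_A$ is disjoint from $S_A \cup N_A$, and that $S_{A-e_k}$ is disjoint from $S_A \cup N_A$ and contained in $F_A$.
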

\begin{proof}
    The first part of the lemma is straightforward. For the second part, we first observe that we have a partition $(\N^*)^n = S_A \sqcup N_A \sqcup F_A$ (where $\sqcup$ denotes a disjoint union), and that $S_{A-e_k}$ is a subset of $F_A$. Hence, every $I \in S_{A-e_k}$ is the image of a unique element in each of the sets $F_A, S_A$ and $N_A$. These elements are respectively $I, I+e_k$ and $I + 2e_k$. If $I \in S_A \cup N_A$, it is not in $S_{A-e_k}$, nor in $F_A$, and is therefore the image of a single element in $N_A$, namely $I + 2e_k$. Finally, if $I \in F_A$, then it is a fixed point of $\psi_A$.
\end{proof}

The definition of a CAD (see Definition \ref{def:cad}) can be directly rephrased to yield the following convenient characterization. 

\begin{lemma}\label{lemma:CADcharact}
    Let $\Dr_n$ be a partition of $\R^n$ and set $\Dr_{k} = \pi_{k}(\Dr_n)$ for every $k \in \{0, \ldots, n\}$. The tuple $\Dr = (\Dr_1, \ldots, \Dr_n)$ is a CAD if and only if $\Dr_{n-1}$ is a CAD of $\R^{n-1}$ and every $D \in \Dr_n$ is a section or a sector with basis $\pi_{n-1}(D)$, and $\pi_{n-1}(D) \in \Dr_{n-1}$.
\end{lemma}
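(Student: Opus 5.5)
We prove the two implications separately, in each case unfolding Definition~\ref{def:cad} one level at a time. The statement only concerns the last level $n$, because $\Dr_{n-1}$ (and hence all lower levels) is assumed or required to be a CAD. Throughout, the $\Dr_k$ are defined by $\Dr_k=\pi_k(\Dr_n)$, and we use the fact that the projection of a partition of $\R^n$ is a family of sets covering $\R^k$.

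For the direct implication, let $\Dr$ be a CAD. By Definition~\ref{def:cad}, the truncated sequence $(\Dr_1,\ldots,\Dr_{n-1})$ satisfies the same inductive data, so $\Dr_{n-1}$ is a CAD of $\R^{n-1}$. Every $D\in\Dr_n$ has the form $C_{I:j}$ with $C_I\in\Dr_{n-1}$, and by construction it is either $C_I\odot\{\xi_{I:j}\}$ or $C_I\odot(\xi_{I:j-1},\xi_{I:j+1})$. In either case $\pi_{n-1}(D)=C_I$, since sectors and sections over a base project onto that base. This settles the direct implication.

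For the converse, suppose $\Dr_{n-1}$ is a CAD, and that each $D\in\Dr_n$ is a section or sector with base $\pi_{n-1}(D)\in\Dr_{n-1}$. Fix a cell $C_I\in\Dr_{n-1}$ and let $\mathcal{D}_I=\{D\in\Dr_n:\pi_{n-1}(D)=C_I\}$. The cells of $\Dr_n$ meeting $C_I\times\R$ are exactly those of $\mathcal{D}_I$. Indeed, the base of any cell is a cell of the partition $\Dr_{n-1}$, and two cells of a partition that meet are equal. Therefore $\mathcal{D}_I$ partitions the cylinder $C_I\times\R$.

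The main step is to show that $\mathcal{D}_I$ has exactly the interlaced shape of Definition~\ref{def:cad}. Let $\xi_1,\ldots,\xi_u\in\mathcal{S}^0(C_I)$ be the functions defining the sections in $\mathcal{D}_I$. Their graphs are pairwise disjoint, so at each point two distinct $\xi_i$ take different values. I would then argue that disjoint graphs over a connected set force a strict order on the whole of $C_I$: the difference $\xi_i-\xi_j$ is continuous and never zero on the connected set $C_I$, so by the intermediate value theorem it has constant sign. We may therefore relabel so that $\xi_1<\cdots<\xi_u$.

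It remains to identify the sectors. Fix $x\in C_I$; the fibre $\{x\}\times\R$ minus the $u$ points $\xi_i(x)$ consists of $u+1$ open intervals, and these must be covered by the sectors of $\mathcal{D}_I$. A sector $C_I\odot(l,m)$ contains no section point, so over $x$ it lies inside a single gap $(\xi_j(x),\xi_{j+1}(x))$. Disjointness and the covering property then force $l=\xi_j$ and $m=\xi_{j+1}$ pointwise, and hence as functions. This shows that the sectors in $\mathcal{D}_I$ are exactly the $C_I\odot(\xi_j,\xi_{j+1})$ for $0\le j\le u$, with $\xi_0=-\infty$ and $\xi_{u+1}=+\infty$.

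Indexing the sections as $C_{I:2j}$ and the sectors as $C_{I:2j+1}$ then realises item~(2) of Definition~\ref{def:cad} at level $n$, so $\Dr$ is a CAD.
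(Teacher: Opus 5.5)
\paragraph{Assessment.} Your strategy is the natural one, and it is what the paper intends: the paper gives no argument beyond calling the lemma a direct rephrasing of Definition~\ref{def:cad}. The direct implication is fine. In the converse, partitioning each cylinder $C_I\times\R$ by $\mathcal{D}_I$ and ordering the section functions via the intermediate value theorem on the connected cell $C_I$ is also correct. One step, however, does not follow as written.

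\paragraph{The gap: the gap index may depend on $x$.} You fix $x\in C_I$ and show that the fibre over $x$ of a sector $C_I\odot(l,m)$ is exactly one gap $(\xi_j(x),\xi_{j+1}(x))$. The index $j$ was chosen after $x$, so ``pointwise, and hence as functions'' does not follow: you still need $j$ to be independent of $x$.

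This is not a formality, because it genuinely uses connectedness of the base. Over the disconnected base $(0,1)\cup(2,3)$, with constant sections $0,1,2$, take two sectors with continuous bounds:
\begin{itemize}
\item one with $(l,m)=(0,1)$ on $(0,1)$ and $(l,m)=(1,2)$ on $(2,3)$;
\item the other with $(l,m)=(1,2)$ on $(0,1)$ and $(l,m)=(0,1)$ on $(2,3)$.
\end{itemize}
These satisfy everything you check pointwise, yet neither is of the form $C_I\odot(\xi_j,\xi_{j+1})$.

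The fix is the same connectedness argument you used for the sections. If $l\not\equiv-\infty$, consider the sets $Z_j=\{x\in C_I : l(x)=\xi_j(x)\}$ for $1\le j\le u$.
\begin{itemize}
\item They are closed in $C_I$.
\item They are pairwise disjoint, since the $\xi_j$ are strictly ordered.
\item They cover $C_I$, by your pointwise argument.
\end{itemize}
Hence each $Z_j$ is also open in $C_I$, and connectedness of $C_I$ forces exactly one of them to be nonempty. Then $l=\xi_j$ on all of $C_I$, and $m=\xi_{j+1}$ follows from the fibrewise statement. The case $l\equiv-\infty$ is immediate.

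\paragraph{Two smaller remarks.} First, you use that CAD cells are connected without saying so. This is standard, by induction on the dimension, and the paper uses it too.

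Second, writing ``let $\xi_1,\ldots,\xi_u$'' silently assumes there are finitely many sections over $C_I$. That requires $\Dr_n$ to be a finite partition, which the statement leaves implicit; it does hold where the lemma is applied, namely for the finite partition $\Cr'$ in Proposition~\ref{prop:liftingReduction}. Without finiteness the lemma fails. For example, the partition of $\R^2$ into the lines $\R\times\{k\}$ and the strips $\R\times(k,k+1)$, $k\in\mathbb{Z}$, satisfies all the hypotheses over the CAD $\{\R\}$ of $\R$, but it is not a CAD. You should state this assumption explicitly.
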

Under the assumptions and with the notation of Definition \ref{def:CADRed}, the following holds.
\begin{proposition}\label{prop:liftingReduction}
    For every even $I' = J : 2j \in T'$, the union $$\xi'_{I'} = \bigcup_{I \in T \; : \; \psi_A(I) = I'} \xi_{I}$$ is a function defined on $C'_J$.
    The following assertions are equivalent:
    \begin{enumerate}[(i)]
        \item The tree reduction rule $\Psi_A$ lifts to a CAD($\Fc$) reduction rule $\Phi_A$ defined on $\Cr$ and the corresponding reduced CAD is  $\Cr'$; \label{item:(i)}
        \item The finite partition $\Cr'$ is a CAD; \label{item:(ii)}
        \item For every even $I' \in T'$, the function $\xi'_{I'}$ is continuous; \label{item:(iii)}
        \item For every even $I' \in T' \cap S_{A-e_{|A|}}$, the function $\xi'_{I'}$ is continuous.\label{item:(iv)}
    \end{enumerate}
\end{proposition}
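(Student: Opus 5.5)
We first show that each $\xi'_{I'}$ is a well-defined function on $C'_J$, and then prove the chain (i)$\Leftrightarrow$(ii)$\Rightarrow$(iii)$\Rightarrow$(iv)$\Rightarrow$(ii).

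\emph{Well-definedness.} Write $I' = J:2j$ with $|I'| = l$. By Lemma~\ref{lemma:psiBij}, the preimages $\psi_A^{-1}(\{I'\})$ are exactly the sets $\{I, I+e_k, I+2e_k\}$, $\{I+2e_k\}$ or $\{I\}$. Here $k=|A|$.
\begin{itemize}
\item If $l \le k$, the preimage is a single index. The only exception is $l=k$ with $I' \in S_{A-e_k}$, i.e.\ $I' = A-e_k$. But $I'$ is even while $A-e_k$ is odd, so this cannot happen.
\item If $l > k$, every preimage $I$ ends with the same last entry $2j$. Its prefix $\pi$-parent $I_{\le l-1}$ ranges over $\psi_A^{-1}(\{J\})$.
\end{itemize}
Hence the domains of the functions $\xi_I$ are the cells $C_{I_{\le l-1}}$. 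These are pairwise disjoint, and their union is $C'_J$ by the definition of $\Cr'$. A union of functions with disjoint domains is a function, so $\xi'_{I'}$ is defined on $C'_J$. In the same way, each cell of $\Cr'$ indexed by $J:2j$ (resp.\ $J:2j+1$) is exactly the graph of $\xi'_{J:2j}$ (resp.\ the band between $\xi'_{J:2j}$ and $\xi'_{J:2j+2}$) over $C'_J$. The inequalities $\xi'_{J:2j}<\xi'_{J:2j+2}$ hold pointwise because they hold on each piece. Here one uses the tree condition \eqref{eqn:L}: it guarantees that the three merged cylinders carry the same number of cells, so the indices match.

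\emph{(i)$\Leftrightarrow$(ii).} This is Definition~\ref{def:CADRed}, once we check that $\Cr'$ is automatically adapted to $\Fc$ whenever it is a CAD. This holds because $L'(\psi_A(I)) = L(I)$ on leaves, so every merged cell consists of cells lying in exactly the same $S_i$.

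\emph{(ii)$\Rightarrow$(iii).} If $\Cr'$ is a CAD, its section cells are graphs of continuous semi-algebraic functions over cells of $\Cr'$. By the description above, these functions are exactly the $\xi'_{I'}$.

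\emph{(iii)$\Rightarrow$(iv).} This is trivial.

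\emph{(iv)$\Rightarrow$(ii).} We argue by induction on the level $l$, using Lemma~\ref{lemma:CADcharact}.
\begin{itemize}
\item For $l<k$, the partition $\Cr'_l = \Cr_l$ is unchanged.
\item At level $k$, only $C_{A-e_k}$, $C_A$ and $C_{A+e_k}$ merge into the sector $C_{A-e_k}\cup C_A\cup C_{A+e_k}$ with bounds $\xi_{A-2e_k}$ and $\xi_{A+2e_k}$. This is a valid sector because those bounds are unchanged and continuous.
\item For $l>k$, suppose $\Cr'_{l-1}$ is a CAD. For even $I'\notin S_{A-e_k}$, the preimage of $I'$ is a single index, so $\xi'_{I'}$ equals some $\xi_I$ (after relabelling) and is continuous. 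For $I'\in S_{A-e_k}$, continuity holds by hypothesis (iv). Semi-algebraicity is preserved under finite unions. Hence every cell of $\Cr'_l$ is a section or sector of continuous semi-algebraic functions over a cell of $\Cr'_{l-1}$, and Lemma~\ref{lemma:CADcharact} applies.
\end{itemize}

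The main obstacle is the bookkeeping in the well-definedness step. One must check precisely that the partition obtained by merging in the cylinder over $C'_J$ has the section/sector shape with the right ordering of indices. This is where the equality of labels \eqref{eqn:L} and the explicit preimage description of Lemma~\ref{lemma:psiBij} do the real work. The topological content reduces entirely to continuity of the glued functions $\xi'_{I'}$.
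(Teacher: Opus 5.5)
Your proof is correct and follows essentially the same route as the paper. You use the preimage description of Lemma~\ref{lemma:psiBij} for well-definedness, (i)$\Leftrightarrow$(ii) by definition, and the section description for (ii)$\Rightarrow$(iii); for the return direction you use Lemma~\ref{lemma:CADcharact} together with the fact that only even indices in $S_{A-e_{|A|}}$ have non-singleton preimages. The only structural change is that you fold the paper's (iv)$\Rightarrow$(iii) and (iii)$\Rightarrow$(ii) into a single level-by-level induction, and you make explicit several points the paper leaves implicit: that adaptedness to $\Fc$ is automatic from $L'(\psi_A(I)) = L(I)$, that semi-algebraicity survives the finite union, and that \eqref{eqn:L} is what guarantees all preimages lie in $T$.
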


\begin{proof}
 %We start with a general observation before proving the equivalences. 
 For every even $I' = J : 2j \in T'$, Lemma \ref{lemma:psiBij} and the definition of $\xi'_{I'}$ yield
    \begin{align}\label{eqn:xi'I'}
        \xi'_{I'} = 
        \begin{cases}
            \xi_{I'} \cup \xi_{I'+ e_{|A|}} \cup \xi_{I' + 2 e_{|A|}} &\text{ if } I' \in S_{A - e_{|A|}}, \\
            \xi_{I' + 2 e_{|A|}} &\text{ if } I' \in S_A \cup N_A, \\
            \xi_{I'} &\text{ otherwise.}
        \end{cases}
    \end{align}
Considering each case separately, and applying the definition of $C'_J$, we obtain that $\xi'_{I'}$ is a function defined on $C'_{J}$.
For instance, writing $A = (a_1, \ldots, a_{|A|})$, if $I' \in S_A \cup N_A$ and if $|I'| > |A|$, then there exists $m \in \N$ such that 
$I' = J:2j$ with $$J = (a_1, \ldots, a_{|A|}+m, i'_{|A|+1}, \ldots, i'_{|I'|-1}).$$
%$$I' = \underbrace{(a_1, \ldots, a_{|A|}+m, i'_{|A|+1}, \ldots, i'_{|I'|-1})}_{J=} :  2j.$$
In this case, $J \in S_A \cup N_A$ and thus $C'_J = C_{J+2e_{|A|}}$, which is indeed the domain of $\xi_{I'+2e_{|A|}}$. The proof of the other cases are analogous.

The equivalence of the first and second items is precisely the definition of CAD reduction rules (see Definition \ref{def:CADRed}).
To show that \ref{item:(ii)} implies \ref{item:(iii)}, we observe that if $\Cr'$ is a CAD, then for every even $I' = J : 2j \in T'$ of length $l \in \{2, \ldots, n\}$, the CAD cell $C'_{I'} \in \Cr'_{l}$ is the section $C'_J \odot \{\xi'_{I'}\}$. By definition of a CAD, the function $\xi'_{I'}$ is continuous.   
The fact that \ref{item:(iii)} implies \ref{item:(ii)} follows from Lemma \ref{lemma:CADcharact}, and the following two observations. First, if $l = |A|$, then we have
$$\mathscr{C}'_l  = \left(\Cr_l \setminus\{C_{A - e_{l}},C_{A}, C_{A - e_{l}}\}\right) \cup \{C_{A - e_{l}} \cup C_{A} \cup C_{A - e_{l}}\},$$
which is a CAD of $\R^l$.
Second, for every $l \in \{1, \ldots, n\}$ and $ C'_{I'} \in \Cr'_l$
we have
            \begin{align*}
        C'_{I'} = \begin{cases}
           C'_J \odot \{\xi'_{J:2j}\} &\text{ if } I' = J:2j, \\
            C'_J \odot(\xi'_{J:2j}, \xi'_{J:2(j+1)}) &\text{ if } I' = J:2j+1.
        \end{cases}
    \end{align*}
The fourth assertion is readily implied by the third. The converse is also true since for every even $I' \in T' \setminus S_{A-e_{|A|}}$, the function $\xi'_{I'}$ coincides either with $\xi_{I'+2e_{|A|}}$ or with $\xi_{I'}$ (see Equation \eqref{eqn:xi'I'}), which are both continuous since $\Cr$ is a CAD.     
\end{proof}

\subsection{Pasting lemmas}\label{sub:pasting}

To apply Proposition \ref{prop:liftingReduction} in Section \ref{sec:proof}, it will be convenient to show that some unions of functions are continuous. 

In this framework, an elementary result in Topology is the pasting lemma (see Theorem 18.3 of \cite{munkres} for a proof).

\begin{lemma}[The pasting lemma]\label{lemma:classicalGluing}
Let $A,B$ be both closed (resp. both open) subsets of a topological space $X$ such that $X = A \cup B$, and let $Y$ also be a topological space. If $f :A\to Y$ and $g : B \to Y$ are continuous and if $f \cup g : X \to Y$ is a map, then $f \cup g$ is continuous.
\end{lemma}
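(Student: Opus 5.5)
The plan is to work directly with the standard characterisation of continuity by preimages of closed sets (or, equivalently for the open case, preimages of open sets). Let $h = f \cup g : X \to Y$; by hypothesis $h$ is a well-defined map, with $h|_A = f$ and $h|_B = g$. I treat the closed case first and then indicate the modifications for the open case.

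For the closed case, fix a closed set $F \subseteq Y$. Since $X = A \cup B$ and $h$ restricts to $f$ on $A$ and to $g$ on $B$, one has the set-theoretic identity
\[
h^{-1}(F) = f^{-1}(F) \cup g^{-1}(F).
\]
By continuity of $f$, the set $f^{-1}(F)$ is closed in the subspace $A$. Hence it equals $A \cap K$ for some closed $K \subseteq X$. Because $A$ is itself closed in $X$, $f^{-1}(F)$ is closed in $X$. The same argument shows that $g^{-1}(F)$ is closed in $X$. Therefore $h^{-1}(F)$ is a union of two closed subsets of $X$, hence closed, and $h$ is continuous.

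The open case is word-for-word the same argument. Replace "closed" by "open" throughout, use preimages of open sets $U \subseteq Y$, and use the fact that a set open in an open subspace is open in $X$. Alternatively, in the open case, continuity is local: every point of $X$ has an open neighbourhood ($A$ or $B$) on which $h$ agrees with a continuous map.

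The only step needing care is the transfer "closed (resp. open) in the subspace $A$ implies closed (resp. open) in $X$." This step is exactly where the hypothesis that $A$ and $B$ are both closed (resp. both open) is used. I expect no real obstacle: the remaining steps are formal, and the well-definedness of $f \cup g$ is part of the hypotheses rather than something to be proved.
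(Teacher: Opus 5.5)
Your proof is correct and is the standard argument. The paper gives no proof of its own but cites Theorem 18.3 of Munkres, which proceeds exactly as you do: write $h^{-1}(F) = f^{-1}(F) \cup g^{-1}(F)$, note that each piece is closed in $X$ because $A$ and $B$ are closed, and obtain the open case from the local formulation of continuity.
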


In Section \ref{sec:proof}, we apply this result in combination with the next lemma (in the specific case where $X$ is a union of some CAD cells),  whose proof relies on elementary topological arguments. 

\begin{lemma}\label{lemma:consecutive-closed}
        Let $S$ be a semi-algebraic set of $\R^{n-1}$ and consider the sectors and the section\footnote{Recall that this implies that $f_1 \in \mathcal{S}^0(S) \cup \{-\infty\}, f_2 \in \mathcal{S}^0(S), f_3 \in \mathcal{S}^0(S) \cup \{+\infty\}$ and that $f_1 < f_2 < f_3$ on $S$ (see Subsection \ref{sub:CAD}).}
        $C = S \odot (f_1,f_2), D = S \odot \{f_2\}, E = S \odot (f_2,f_3).$
        In $C \cup D \cup E$, the sets $C \cup D$ and $D \cup E$ are closed and the sets $C$ and $E$ are open.
\end{lemma}
% \begin{proof}
%    Since $C,D$ and $E$ are pairwise disjoint, $E$ is the relative complement of $C \cup D$ with respect to $C \cup D \cup E$. To show that $C \cup D$ is closed, it is equivalent to show that $E$ is open. We consider the two continuous functions 
%     $$F_i : C \cup D \cup E \to \R : (\textbf{x}, x_n) \mapsto x_n - f_i(\textbf{x}) \quad \text{ for } i \in \{2,3\},$$
%     and directly obtain that $E = F_2^{-1}\left((0,\infty)\right) \cap F_3^{-1}\left((-\infty,0)\right)$, which is the intersection of two open sets. 
%     Similar arguments hold for $D \cup E$ and $C$.
% \end{proof}

At one stage of the argument in Section \ref{sec:proof}, it will be convenient to have a version of the pasting lemma formulated in terms of limits and some adjacent CAD cells of the plane.

\begin{lemma}\label{prop:gluingLimit}
    Let $C, D\subseteq \R^2$ be two CAD cells such that $D \subseteq \overline{C} \setminus C$, and let $f : C \to \R$ and $g : D \to \R$ be two continuous functions.
    If for every $p \in D$, we have 
    \begin{equation}\label{eqn:lim}
    \lim_{\substack{x \to p\\x\in C}}f(x)=g(p),
    \end{equation}
    then the function $f \cup g : C \cup D \to \R$ is continuous.
\end{lemma}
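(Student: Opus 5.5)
We argue pointwise on $C \cup D$, using only the sequential characterisation of continuity in the metric space $C \cup D \subseteq \R^2$. Set $h = f \cup g$. This is a well-defined function on $C \cup D$: since $D \subseteq \overline{C} \setminus C$, the cells $C$ and $D$ are disjoint. We check continuity at an arbitrary point $q \in C \cup D$ and split according to whether $q \in C$ or $q \in D$.

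\emph{Case $q \in C$.} We first show that $C$ is open in $C \cup D$, i.e. that $q$ has a neighbourhood in $\R^2$ disjoint from $D$. Since $D \subseteq \overline{C}$, we have $\overline{D} \subseteq \overline{C}$, and $\dim D < \dim C$ because $D \subseteq \overline{C}\setminus C$. For plane CAD cells we can check the needed fact directly from the classification of cells as points, vertical segments, sections and sectors: a cell is open in its closure, so $C = \overline{C} \cap U$ for some open $U \subseteq \R^2$. Then $U \cap (C \cup D) = C$, because $D \cap U \subseteq (\overline{C} \cap U)\setminus C = \emptyset$. Hence $C$ is a relatively open subset of $C \cup D$ on which $h$ coincides with the continuous function $f$, so $h$ is continuous at $q$. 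This local closedness of CAD cells is the one structural input. If one prefers to avoid it, it also follows by induction from the description $C = \pi_1(C) \odot (\cdot)$: the base $\pi_1(C)$ is a point or an open interval, and the bounds are continuous.

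\emph{Case $q = p \in D$.} Let $(x_m)$ be a sequence in $C \cup D$ converging to $p$. It suffices to show $h(x_m) \to g(p)$. Split the sequence into the subsequence lying in $C$ and the subsequence lying in $D$; it is enough to treat each one when it is infinite. Along the terms in $D$, $h = g$ is continuous on $D$, so $g(x_m) \to g(p)$. Along the terms in $C$, hypothesis \eqref{eqn:lim} gives $f(x_m) \to g(p)$ by the sequential characterisation of limits; this is legitimate since $p \in \overline{C}$, so the limit is taken at an accumulation point of $C$. Since both subsequences have the same limit $g(p)$, so does $h(x_m)$. Therefore $h$ is continuous at $p$.

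The main (mild) obstacle is the first case, namely justifying that $C$ is relatively open in $C \cup D$, i.e. that $D$ does not accumulate at points of $C$. This is exactly where the CAD-cell hypothesis (local closedness of cells, together with $D \cap \overline{C} \subseteq \overline{C}\setminus C$) is used. Everything else is the standard sequential argument.
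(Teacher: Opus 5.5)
Your proof is correct and follows the paper's argument. At points of $C$ the function is continuous because $C$ is relatively open in $C \cup D$, and at points of $D$ continuity comes from the continuity of $g$ combined with the limit hypothesis. The only variation is how you justify that a point of $C$ has a neighbourhood missing $D$: you use local closedness of CAD cells ($C = \overline{C} \cap U$ with $U$ open), whereas the paper appeals to a dimension argument on $C$ tailored to the plane. Your version has the small advantage of not depending on the ambient dimension being $2$.
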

\begin{proof}
    For every $p \in C \cup D$, we show that $f \cup g$ is continuous at $p$. If $p \in C$, this follows from the continuity of $f$ at $p$, together with the existence of a neighbourhood $N \subset \R^2$ of $p$ that does not intersect $D$ (which follows from a simple dimension argument on~$C$). If $p \in D$, the continuity follows from that of $g$ at $p$ and from~\eqref{eqn:lim}.
\end{proof}

\subsection{Every CAD cell of the plane is locally boundary connected}\label{sec:lbc}

% \todo[inline]{Refaire toute cette transition. Dire que le goal ici est le Corollaire \ref{cor:bypass}, que les résultats sont intéressants en tant que tels}
% \todo[inline]{Dire qu'on prend un peu de hauteur dans le but de montrer qu'à un plongement près dans $[-\infty, \infty]$, les sections de CAD sont équirégulières à $\{0\}^2, (0,1)^2, ...$, et on en déduit le thrm 3.22.}

The main goal of this section is Corollary~\ref{cor:bypass}, the final ingredient needed for the proof of the main theorem, and which based on the closedness and curtainedness assumption. 

To this end, we establish several topological results of independent interest in CAD theory. In particular, we show that every CAD cell in the plane is locally boundary connected (Theorem~\ref{thrm:CADcellsPlanelbc}), a result that appears to be missing from the literature. This is achieved by observing that such cells are, up to an embedding into $[-\infty, \infty]^2$, equiregular to one of the sets $\{0\}^2, (-1,1)\times \{0\}$ or $(-1,1)^2$ (Proposition \ref{prop:CADcellsLowdim}), and that this classification suffices for our purposes (Proposition \ref{prop:lbp-embedding}). As a corollary, we will obtain that for every CAD cell $C$ of $\R^3$, and every $p \in \R^2$, the fibre $\overline{C} \cap \left(\{p\} \times \R\right)$ is always a closed segment. 
We then apply this result under the closedness and curtainedness assumptions to obtain Corollary~\ref{cor:bypass}. 

%We recall some definitions of \cite{DLSregular} in order to prove that every CAD cell of $\R$ and of $\R^2$ is locally boundary connected (defined below). As a corollary, we will obtain that for every CAD cell $C$ of $\R^3$, and every $p \in \R^2$, the fibre $\overline{C} \cap \left(\{p\} \times \R\right)$ is always a closed segment. This plays a central role in the proof of the main theorem.

\subsubsection{Locally boundary $\mathscr{P}$}

As in \cite{DLSregular}, since extending the arguments from the locally boundary connected setting to the more general locally boundary $\mathscr{P}$ setting (where $\mathscr{P}$ is any topological property) requires no additional effort, we state and prove the results directly in this broader framework.
A topological property $\mathscr{P}$ is a property of topological spaces that is invariant under homeomorphisms, i.e. for every pairs of homeomorphic topological spaces $X$ and $Y$, if $X$ satisfies $\mathscr{P}$, then $Y$ satisfies $\mathscr{P}$.
For instance, the connectedness is a topological property.

\begin{definition}[See \cite{DLSregular}]\label{def:locallyboundaryP}
    Let $\mathscr{P}$ be a topological property and let $X$ be a subset of a topological space $X'$. 
    The (cell) boundary  of $X$ (in $X'$) is $\partial X = \overline{X} \setminus X$.
    We say that $X$ is locally boundary $\mathscr{P}$ (in $X'$)  if every $p \in \partial X$ has a base of neighbourhoods $\mathcal{N}$ in $X'$ such that for every $N \in \mathcal N$, the set $X \cap N$ satisfies $\mathscr{P}$.  
\end{definition}

An embedding is an injective continuous map $e : X' \to Y'$ between two topological spaces which is a homeomorphism between $X'$ and $e(X')$. We provide a criterion useful to show that some subset $X \subseteq X'$ is locally boundary $\mathscr{P}$ in $X'$.
%For every subset $X \subseteq X'$, we show that if the embedded set $e(X)$ is locally boundary $\mathscr{P}$, then $X$ is locally boundary $\mathscr{P}$ as well.

\begin{proposition}\label{prop:lbp-embedding}
    Let $\mathscr{P}$ be a topological property and $e : X' \to Y'$ be an embedding. For every $X \subseteq X'$, if $e(X)$ is locally boundary $\mathscr{P}$ in $Y'$, then $X$ is locally boundary $\mathscr{P}$ in $X'$.
\end{proposition}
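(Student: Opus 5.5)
The idea is to transport everything through the homeomorphism $e : X' \to e(X')$. Since $e$ is an embedding, $e(X')$ carries the subspace topology of $Y'$, and $e$ induces a bijection between open sets of $X'$ and sets of the form $U \cap e(X')$ with $U$ open in $Y'$. I would use this bijection to carry neighbourhood bases and closures from $Y'$ back to $X'$.

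First, I would relate the boundaries. I claim that $e(\partial X) \subseteq \partial(e(X))$, where the closure on the right is taken in $Y'$. If $p \in \overline{X}$ (closure in $X'$), continuity of $e$ gives $e(p) \in \overline{e(X)}$. Also $e(p) \notin e(X)$ whenever $p \notin X$, by injectivity. Hence every $p \in \partial X$ satisfies $e(p) \in \partial e(X)$. Equality is not needed, and indeed fails in general, since $\overline{e(X)}$ may contain points outside $e(X')$.

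Next, let $p \in \partial X$. By hypothesis, $e(p)$ has a base of neighbourhoods $\mathcal{N}$ in $Y'$ such that $e(X) \cap N$ satisfies $\mathscr{P}$ for each $N \in \mathcal{N}$. I would set $\mathcal{M} = \{e^{-1}(N) : N \in \mathcal{N}\}$ and check two things.

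1. $\mathcal{M}$ is a base of neighbourhoods of $p$ in $X'$.
   - Each $e^{-1}(N)$ is a neighbourhood of $p$ because $e$ is continuous.
   - Let $V$ be a neighbourhood of $p$ in $X'$. Because $e$ is a homeomorphism onto its image, $e(V)$ is a neighbourhood of $e(p)$ in $e(X')$. So $e(V) \supseteq U \cap e(X')$ for some open $U \ni e(p)$ in $Y'$.
   - Choose $N \in \mathcal{N}$ with $N \subseteq U$. Then $e^{-1}(N) \subseteq e^{-1}(U) \subseteq V$, using injectivity for the last inclusion.

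2. For each $N \in \mathcal{N}$, we have $X \cap e^{-1}(N) = e^{-1}(e(X) \cap N)$, by injectivity. The restriction of $e$ is therefore a homeomorphism from $X \cap e^{-1}(N)$ onto $e(X) \cap N$, both sets carrying their subspace topologies. Since $\mathscr{P}$ is invariant under homeomorphisms and $e(X) \cap N$ satisfies $\mathscr{P}$, the set $X \cap e^{-1}(N)$ satisfies $\mathscr{P}$.

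Together these give the conclusion: $X$ is locally boundary $\mathscr{P}$ in $X'$.

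The only delicate step is the base-of-neighbourhoods argument in item 1. That is the one place where the embedding hypothesis is used rather than mere continuity: we need that neighbourhoods of $p$ in $X'$ come from neighbourhoods of $e(p)$ in $Y'$. The rest is routine bookkeeping with injectivity and subspace topologies.
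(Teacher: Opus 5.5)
Your proof is correct and follows the same route as the paper. You first show $e(\partial X)\subseteq \partial e(X)$ by continuity and injectivity, then pull back the neighbourhood base of $e(p)$ through $e^{-1}$. You also spell out two points the paper leaves implicit: that $e^{-1}(\mathcal{N})$ really is a neighbourhood base at $p$ (the one place the embedding hypothesis, not mere continuity, is used), and that $X\cap e^{-1}(N)=e^{-1}(e(X)\cap N)$ is homeomorphic to $e(X)\cap N$.
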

\begin{proof}
    For $X \subseteq X'$, we write $Y = e(X)$. We consider $p \in \partial X$ and construct a base of neighbourhoods $\mathcal{N}$ in $X'$ as in Definition \ref{def:locallyboundaryP}.
    Since $e$ is continuous, it is clear that $e(\overline{X}) \subseteq \overline{e(X)} = \overline{Y}$. Furthermore, the injectivity of $e$ implies $e(\partial X) \subseteq  \partial Y$.
   % $$e(\partial X) = e (\overline{X} \setminus X) = e(\overline{X}) \setminus e(X) \subseteq \overline{Y} \setminus Y = \partial Y.$$ 
   This shows that $e(p) \in \partial Y$. By assumptions, there exists a base of neighbourhoods $\mathcal{M}$ in $Y'$ such that for every $M \in \mathcal{M}$, the set $Y \cap M$ satisfies $\mathscr{P}$. We set $\mathcal{N} = e^{-1}(\mathcal{M})$ and obtain the conclusion. %(For every $N \in \mathcal{N}$, there exists $M \in \mathcal{M}$ such that $N = e^{-1}(M)$. In particular, $N$ is a neighbourhood of $p$ in $X'$ since $e$ is continuous. Moreover, the set $X \cap N = e^{-1}(Y) \cap e^{-1}(M) = e^{-1}(Y \cap M)$ satisfies $\mathscr{P}$ since it is a topological property, and since $e$ is a homeomorphism between $X$ and $Y$.)
\end{proof}

% \lucas[inline]{Inutile. Retirer cette partie jusque 3.5.2 ?}
% \lucas[inline]{on a besoin de la def de homeo de pair plus loin}
We recall that given the inclusions of topological spaces $X \subseteq X'$ and $Y \subseteq Y'$, a homeomorphism (of pairs) $\varphi : (X', X) \to (Y ', Y )$ is a homeomorphism $\varphi : X' \to Y '$ such that $\varphi(X) = Y$. A direct consequence of the previous result is that locally boundary $\mathscr{P}$ is preserved by homeomorphism of pairs. 
% \begin{proposition}
%      Let  $\varphi : (X', X) \to (Y ', Y )$ be a homeomorphism of pairs and $\mathscr{P}$ be a topological property. If $X$ is locally boundary $\mathscr{P}$ (in $X'$), then $Y$ is locally boundary $\mathscr{P}$ (in $Y'$),
% \end{proposition}
% \begin{proof}
%         This is a direct application of Proposition \ref{prop:lbp-embedding}, together with the fact that $\varphi^{-1} : Y' \to X'$ is an embedding. %sending $Y$ to~$X$.
%     % We consider $p \in \partial Y$ and we obtain a base of neighbourhood $\mathcal{N}$ as in Definition \ref{def:locallyboundaryP}.
%     % By assumption on $\varphi$, it is clear that $\varphi(\partial X) = \partial Y$. Thus, there exists $q \in \partial X$ such that $p = \varphi (q).$ 
%     % Since $X$ is locally boundary $\mathscr{P}$, there exists a base of neighbourhoods $\mathcal{M}$ in $X'$ such that for every $M \in \mathcal{M}$, the set $X \cap M$ satisfies $\mathscr{P}$. We set $\mathcal{N} = \varphi(\mathcal{M})$ and the conclusion follows. %(For every $N \in \mathcal{N}$, there exists $M \in \mathcal{M}$ such that $N = \varphi(M)$. The set $Y \cap N = \varphi(X) \cap \varphi(M) = \varphi(X \cap M)$ satisfies $\mathscr{P}$ since it is a topological property.)
% \end{proof}

\subsubsection{CAD cells and the extended real line}

We consider the extended real line $[-\infty, \infty] = \R \cup \{-\infty, \infty\}$. This set is totally ordered by the extension of the usual order on $\R$ given by $-\infty \leqslant x \leqslant \infty$ for every $x \in [-\infty, \infty]$. The order topology on $[-\infty, \infty]$ makes it a two-point compactification of $\R$. More precisely, the inclusion
$$i : \R \to [-\infty, \infty] : x \mapsto x$$ is an embedding such that $i(\R)$ is dense in the compact $[-\infty, \infty]$, and $[-\infty, \infty] \setminus i(\R) = \{-\infty, \infty\}$ contains two points.

Following \cite{DLSregular}, given the inclusions of topological spaces $X \subseteq X'$ and $Y \subseteq Y'$, we say that $X$ and $Y$ are equiregular if there exists a homeomorphism $\varphi : (\overline{X}, X) \to (\overline{Y}, Y )$. In particular, $(-\infty, \infty) \subseteq [-\infty, \infty]$ and $(0,1) \subseteq \R$ are equiregular via the homeomorphism  $\mathfrak{B} : ([-\infty, \infty], (-\infty, \infty)) \to ([0,1], (0,1))$ defined by 
$$\mathfrak{B}(y) = \begin{cases}
    0 & \text { if } y =-\infty,\\
    \frac{1}{2}(1+\frac{y}{|y| + 1}) & \text { if } y \in (-\infty, \infty),\\
    1 & \text { if } y =\infty.
\end{cases}$$

We show that the embedding of CAD cells of $\R$ (resp. of $\R^2$)  in $[-\infty, \infty]$ (resp. $[-\infty, \infty]^2$) are simple equiregular objects.

\begin{lemma}\label{lemma:equiregularIntervals}
    Let $a',b' \in \R$ such that $a' < b'$. If $l', u' : [a',b'] \to \R$ are continuous and satisfy $l' < u'$ on $(a',b')$, 
    then $(a',b') \odot (l',u')$ is equiregular to $(0,1)^2$.
\end{lemma}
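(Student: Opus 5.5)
The plan is to exhibit an explicit homeomorphism of pairs
$$\bigl(\overline{X}, X\bigr) \to \bigl([0,1]^2, (0,1)^2\bigr), \qquad X = (a',b') \odot (l',u').$$
I would build it as a composition of three elementary maps. The difficulty is that $l'$ and $u'$ may coincide at $a'$ and/or $b'$. In that case the naive normalisation $(x,y) \mapsto \bigl(\tfrac{x-a'}{b'-a'}, \tfrac{y-l'(x)}{u'(x)-l'(x)}\bigr)$ is undefined (or discontinuous) at the endpoints, so the degenerate ends have to be handled separately.

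\textbf{Step 1 (closure).} I would first identify the closure as
$$\overline{X} = K = \{(x,y) : x \in [a',b'],\ l'(x) \leqslant y \leqslant u'(x)\}.$$
By continuity of $l'$ and $u'$ on $[a',b']$, the set $K$ is closed, and it contains $X$, so $\overline{X} \subseteq K$. Note that $l' \leqslant u'$ at the endpoints by continuity. For the reverse inclusion, take $(x,y) = (x, (1-t)l'(x) + t u'(x)) \in K$ with $t \in [0,1]$. I approximate it by points $(x_m, (1-t_m)l'(x_m) + t_m u'(x_m))$ with $x_m \in (a',b')$, $t_m \in (0,1)$, $x_m \to x$ and $t_m \to t$; these lie in $X$ and converge to $(x,y)$. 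I would also record that $X$ is exactly the topological interior of $K$ in $\R^2$.

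\textbf{Step 2 (normalisation).} The shear $(x,y) \mapsto \bigl(\tfrac{x-a'}{b'-a'},\, y - l'(x)\bigr)$ is a homeomorphism of the strip $[a',b'] \times \R$ onto $[0,1] \times \R$. It sends $K$ onto
$$M_w = \{(s,v) : s \in [0,1],\ 0 \leqslant v \leqslant w(s)\},$$
where $w \geqslant 0$ is continuous, $w > 0$ on $(0,1)$, and $w$ is possibly zero at $0$ and/or $1$. It sends $X$ onto $\{0 < s < 1,\ 0 < v < w(s)\}$.

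\textbf{Step 3 (the key step).} I would compare $M_w$ with the model $M_{w_0}$, where $w_0$ is chosen among $1$, $s$, $1-s$, $s(1-s)$ so that $w_0$ vanishes exactly where $w$ does on $\{0,1\}$. The map
$$(s,v) \mapsto \bigl(s,\ v\, w_0(s)/w(s)\bigr)$$
is defined where $w(s) > 0$, and I set it to be the identity on the degenerate endpoint fibres, which are single points. This is the main obstacle: its continuity at a degenerate endpoint. It holds because $0 \leqslant v\,w_0(s)/w(s) \leqslant w_0(s) \to 0$ as $s$ tends to that endpoint. The inverse is handled symmetrically. This gives a homeomorphism of pairs between $M_w$ and $M_{w_0}$, interiors included.

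\textbf{Step 4 (convex model).} Finally, $M_{w_0}$ and $[0,1]^2$ are both compact convex subsets of $\R^2$ with nonempty interior. The standard radial homeomorphism from an interior point maps one onto the other and sends boundary to boundary, hence interior to interior. Composing the maps of Steps 2--4 gives the required homeomorphism $\varphi : (\overline{X}, X) \to ([0,1]^2, (0,1)^2)$, so $X$ is equiregular to $(0,1)^2$.
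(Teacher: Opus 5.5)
Your proposal is correct and follows essentially the same route as the paper. The paper also identifies $\overline{X}=[a',b']\odot[l',u']$ and, when $l'<u'$ at both endpoints, uses the fibrewise affine parametrisation $(t,s)\mapsto\bigl(a'+t(b'-a'),\,(1-s)\,l'(x)+s\,u'(x)\bigr)$ with $x=a'+t(b'-a')$, which is exactly the inverse of your Steps 2--3 with $w_0=1$. The paper dismisses the degenerate endpoint cases as ``analogous'', but there its map collapses a whole edge of $[0,1]^2$ to a point and so is not injective; your comparison with the model regions $M_{w_0}$, followed by the radial homeomorphism of compact convex bodies, supplies the missing detail.
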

\begin{proof}
 A direct computation shows that
 $$\overline{(a',b')\odot(l',u')} = [a',b'] \odot[l',u'],$$
 where the right-hand side is an obvious adaptation of the $\odot$ notation.
 If $l'(a') < u'(a')$ and if $l'(b') < u'(b')$, then we consider the two auxiliary maps $f : \R^2 \times [0,1] \to \R$ and $g : [0,1]^2 \to \R$ given by
 \begin{align*}
        f(x,y,t) &= (1-t)x + ty\\
        g(t,s) &= f\left(l(f(a',b',t)), u(f(a',b',t), s)\right) 
    \end{align*}
    for every $x,y \in \R, t \in [-1,1]$. Note that $t \mapsto f(x,y,t)$ parametrizes the segment $[x,y]$.
    The conclusion follows since
    \begin{align*}
        ([0,1],(0,1)) &\to ([a',b'] \odot[l',u'], (a',b')\odot(l',u')) \\
        (t,s) &\mapsto \left( f(a',b',t),g(t,s) \right)
    \end{align*}
    is a homeomorphism of pairs. 
    The proofs for the other cases (where $l'(a') = u'(a')$ but $l'(b') < u'(b')$, $l'(a') < u'(a')$ but $l'(b') = u'(b')$ or $l'(a') = u'(a')$ and $l'(b') = u'(b')$) are analogous. %In each of these cases, we show that $(a',b')\odot(l',u')$ is respectively equiregular either to a triangular region $(0,1) \odot (0, \text{id}_{\R}), (0,1) \odot (0,1- \text{id}_{\R})$, or to the unit disk. All these sets are themselves equiregular to $(0,1)^2$.
\end{proof}
\begin{proposition}\label{prop:CADcellsLowdim}\mbox{}
    %If $C$ is a CAD cell of $\R$, then $i(C)$ is equiregular to $\{0\}$ or to $(-1,1)$.
    If $C$ is a CAD cell of $\R^2$, then $i\times i(C)$ is equiregular to $\{0\}^2, (-1,1)\times \{0\}$ or to $(-1,1)^2$.
\end{proposition}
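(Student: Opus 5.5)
We will do a case analysis on the shape of the CAD cell $C$ of $\R^2$. Its projection $B = \pi_1(C)$ is a CAD cell of $\R$, so $B$ is either a point $\{a\}$ or an open interval $(a,b)$ with $a \in \R \cup \{-\infty\}$ and $b \in \R \cup \{\infty\}$. The cell $C$ is then either a section $B \odot \{f\}$ or a sector $B \odot (l,u)$, where $l$ may be $-\infty$ and $u$ may be $+\infty$. The common device in every case is to transport everything through $\mathfrak{B} \times \mathfrak{B}$, which is a homeomorphism of pairs from $[-\infty,\infty]^2$ onto $[0,1]^2$. Equiregularity is preserved under homeomorphisms of pairs, so it suffices to show that $\mathfrak{B}\times\mathfrak{B}(C) \subseteq [0,1]^2$ is equiregular to one of the model sets.

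\textbf{Case where the base is a point.} If $B = \{a\}$, then $C$ is a point, and its image is trivially equiregular to $\{0\}^2$, or $C = \{a\} \odot (l,u)$ with $l,u$ constants in $[-\infty,\infty]$. In the latter case the image is an open vertical segment $\{\mathfrak{B}(a)\} \times (\mathfrak{B}(l),\mathfrak{B}(u))$. Its closure is the closed segment, and an affine map gives the homeomorphism of pairs onto $([-1,1]\times\{0\}, (-1,1)\times\{0\})$.

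\textbf{Case where the base is an interval, section.} If $B=(a,b)$ and $C = B \odot \{f\}$, set $a' = \mathfrak{B}(a)$, $b' = \mathfrak{B}(b)$ and $F = \mathfrak{B}\circ f \circ \mathfrak{B}^{-1}$, which is continuous on $(a',b')$ with values in $(0,1)$. The key input is that $F$ extends continuously to $[a',b']$. Since $f$ is semi-algebraic, the limits of $f$ at $a^+$ and $b^-$ exist in $[-\infty,\infty]$; this is the standard monotonicity theorem or the existence of limits for one-variable semi-algebraic functions. Consequently $F$ has finite limits in $[0,1]$ at both endpoints. The closure of the image is then the graph of this extension over $[a',b']$. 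The map $t \mapsto (t, F(t))$, composed with an affine reparametrisation $[-1,1]\to[a',b']$, is a homeomorphism of pairs from $([-1,1]\times\{0\},(-1,1)\times\{0\})$, because a continuous map from a compact space to a Hausdorff space that is injective is an embedding.

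\textbf{Case where the base is an interval, sector.} If $C = B \odot (l,u)$, define $l' = \mathfrak{B}\circ l\circ\mathfrak{B}^{-1}$ and $u' = \mathfrak{B}\circ u \circ \mathfrak{B}^{-1}$, taking $l' \equiv 0$ if $l = -\infty$ and $u' \equiv 1$ if $u=+\infty$. By the same limit argument, $l'$ and $u'$ extend continuously to $[a',b']$. Since $\mathfrak{B}$ is strictly increasing, $l' < u'$ on $(a',b')$, while at the endpoints we only get $\leq$. Moreover $\mathfrak{B}\times\mathfrak{B}(C) = (a',b') \odot (l',u')$. Lemma~\ref{lemma:equiregularIntervals} now gives equiregularity to $(0,1)^2$, which is equiregular to $(-1,1)^2$ via an affine map.

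The main obstacle is establishing the existence of endpoint limits in $[-\infty,\infty]$ for continuous semi-algebraic functions on an open interval, and checking that compactifying through $\mathfrak{B}$ correctly handles unbounded bases and infinite bounds. A smaller point of care is that the closure must be taken inside $[-\infty,\infty]^2$, respectively $[0,1]^2$, and not in $\R^2$. I would cite the monotonicity theorem of \cite{bochnaketal1998} for the limits, and handle the degenerate endpoint cases as they are treated in Lemma~\ref{lemma:equiregularIntervals}.
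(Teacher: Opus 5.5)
Your proposal is correct and follows essentially the paper's route: transport through $\mathfrak{B}\times\mathfrak{B}$, extend the now bounded semi-algebraic bounds continuously to $[a',b']$ via the one-variable semi-algebraic limit results, and conclude the sector case with Lemma~\ref{lemma:equiregularIntervals}. The only difference is that you write out the section case and the point-base cases explicitly, using the graph embedding and an affine map, whereas the paper dismisses them as analogous.
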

\begin{remark}
    This result does not generalizes naively in $\R^n$ with $n \geqslant 3$. For instance, the Cornet cell $\mathfrak{C}$ (see Definition 3.2 of \cite{exotic}) is a CAD cell of $\R^3$ such that $i \times i\times i(\mathfrak{C})$ is not equiregular to $\{0\}^3, (-1,1)\times \{0\}^2,(-1,1)^2\times \{0\}$ or to  $(-1,1)^3$. Indeed, $i(\mathfrak{C})$ is not locally boundary connected, but the four others are. 
\end{remark}
\begin{proof}
        %If $C = \{\xi\}\odot \{\xi'\}$ (section-section) is direct. 
        %If $C = (a,b) \odot (l,u)$ (sector-sector), then we denote $D = (a,b)$ and by $\varphi$ a homeomorphism of pairs between $(\overline{i(D)}, i(D))$ and $([-1,1],(-1,1))$ (see the previous item). The set $i \times i (C)$
        If $C = (a,b) \odot (l,u)$ (sector-sector type), %then if we write $D = (a,b)$, then $i \times i (C) = i(D) \odot (\widetilde{l}, \widetilde{u})$ where $\widetilde{l} \circ i = l$ and $\widetilde{u} \circ i = u$ on $D$. Moreover, we have
        then $i \times i(C)$ is equiregular to 
        $\mathfrak{B}\times \mathfrak{B}(i \times i(C)) = (a',b') \odot (l',u')$ where $a', b' \in \R, a' < b'$ and $l',u'\in \mathcal{S}^0(a',b')$ are bounded and satisfy $l' < u'$ on $(a',b')$. By Proposition 2.5.3 of \cite{bochnaketal1998}, the continuous semi-algebraic maps $l',u'$ both admit a continuous extension on $[a',b']$, that we still denote by $l'$ and $u'$. By Lemma \ref{lemma:equiregularIntervals}, $\mathfrak{B} \times \mathfrak{B}(i \times i(C))$ is equiregular to $(0,1)^2$, and the conclusion follows. 
        
        The proof for the other type of CAD cells of $\R^2$ %(section-sector, sector-section, section-section)
        is analogous. 
\end{proof}

\begin{theorem}\label{thrm:CADcellsPlanelbc}
    Every CAD cell of $\R^2$ is locally boundary connected. 
\end{theorem}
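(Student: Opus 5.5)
The plan is to transport the question into $[-\infty,\infty]^2$ and then into a standard model where local boundary connectedness can be checked by hand. Let $C$ be a CAD cell of $\R^2$ and set $X' = \R^2$. The map $e = i \times i : \R^2 \to [-\infty,\infty]^2$ is an embedding, being a product of embeddings. By Proposition \ref{prop:lbp-embedding}, applied with $\mathscr{P}$ the connectedness property, it suffices to show that $e(C)$ is locally boundary connected in $Y' = [-\infty,\infty]^2$.

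By Proposition \ref{prop:CADcellsLowdim}, there is a homeomorphism of pairs $\varphi : (\overline{e(C)}, e(C)) \to (\overline{Y}, Y)$, where $Y$ is one of $\{0\}^2$, $(-1,1)\times\{0\}$ or $(-1,1)^2$. The closure $\overline{e(C)}$ is taken in the compact space $[-\infty,\infty]^2$, and $\overline{Y}$ is taken in $\R^2$.

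The main technical point is that equiregularity only provides a homeomorphism between closures, whereas local boundary connectedness is defined using neighbourhoods in the ambient space. I would first prove a small localisation remark: a set $X$ is locally boundary $\mathscr{P}$ in $X'$ if and only if it is locally boundary $\mathscr{P}$ in $\overline{X}$. The argument is as follows.
\begin{itemize}
\item Any neighbourhood of $p$ in $\overline{X}$ has the form $N \cap \overline{X}$ for some neighbourhood $N$ of $p$ in $X'$.
\item The identity $X \cap (N\cap \overline{X}) = X \cap N$ then shows that bases of neighbourhoods in $X'$ and in $\overline{X}$ correspond to each other.
\item The condition to check on $X \cap N$ is therefore the same in both settings.
\end{itemize}
With this remark, the homeomorphism of pairs $\varphi^{-1}$ is an embedding $\overline{e(C)} \to \overline{Y}$ sending $e(C)$ onto $Y$. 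By Proposition \ref{prop:lbp-embedding} (or directly, since homeomorphisms of pairs preserve the property), it suffices to show that $Y$ is locally boundary connected in $\overline{Y}$, equivalently in $\R^2$.

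It remains to check the three model cases.
\begin{itemize}
\item For $Y=\{0\}^2$, the boundary $\partial Y$ is empty, so there is nothing to prove.
\item For $Y=(-1,1)\times\{0\}$, we have $\partial Y=\{(\pm1,0)\}$. For $p=(1,0)$, take the open balls $B(p,r)$ with $0<r<1$. Then $Y\cap B(p,r)=(1-r,1)\times\{0\}$, which is connected. The case $p=(-1,0)$ is symmetric.
\item For $Y=(-1,1)^2$, every $p\in\partial Y$ lies on the boundary of the square. Take the open squares (sup-norm balls) $N_r = p + (-r,r)^2$. Then $Y\cap N_r$ is a product of two nonempty open intervals, each the intersection of two intervals. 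It is therefore convex, hence connected.
\end{itemize}
These neighbourhoods form bases at $p$, which gives the conclusion.

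I expect the localisation step, passing between neighbourhoods in the ambient space and in the closure, to be the only place needing care. This is especially so because the closure in $[-\infty,\infty]^2$ contains points at infinity. Those points are genuine boundary points of $e(C)$, but they cause no trouble: $\varphi$ treats them like any other points of the closure, and boundary points of $C$ in $\R^2$ map to boundary points of $e(C)$, as shown in the proof of Proposition \ref{prop:lbp-embedding}.
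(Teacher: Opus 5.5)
Your proof is correct and follows the same route as the paper, which simply combines Proposition~\ref{prop:lbp-embedding} (for the embedding $i\times i$) with Proposition~\ref{prop:CADcellsLowdim}. You also supply what the paper leaves implicit: the localisation step showing that a homeomorphism of closures (equiregularity) suffices, and the direct check of the three model sets. The only slip is harmless: it is $\varphi$, not $\varphi^{-1}$, that maps $\overline{e(C)}$ onto $\overline{Y}$.
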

\begin{proof}
    This follows from Propositions \ref{prop:lbp-embedding} and \ref{prop:CADcellsLowdim}. %For instance, if $C$ is a CAD cell of $\R^2$, then the embedded $i\times i(C)$ is locally boundary connected, since it is equiregular to $\{0\}^2, (-1,1)\times \{0\}$ or to $(-1,1)^2$.
\end{proof}

\subsubsection{Applications of the locally boundary connectedness}
We provide an independent, elementary and short proof of a generalization of Theorem 3.3.31 of \cite{arnon} and Proposition 5.2 of \cite{lazard2010}, both in terms of dimension and in the assumptions involved. In doing so, we highlight the precise role played by the locally boundary connectedness property.
For the remainder of this subsection, if $S \subseteq \R^{n+1}$, then $$S' = (\id_{\R^n} \times i)(S).$$ This is the set $S$, but seen as a subset of $\R^n \times [-\infty, \infty]$.

\begin{proposition}\label{prop:fibrelbc}
    Let $C$ be a CAD cell of $\R^{n+1}$. If $\pi_n(C)$ is locally boundary connected, then for every $p \in \partial \pi_n(C)$, the fibre $\overline{C'} \cap \left(\{p\}\times [-\infty, \infty]\right)$ is a closed segment.
\end{proposition}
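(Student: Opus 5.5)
The plan is to view $C'$ as lying in $\R^n \times [-\infty,\infty]$ and write $C = D \odot (l,u)$ or $C = D\odot\{f\}$, where $D = \pi_n(C)$. In the section case we take $l = u = f$; in the sector case we allow $l=-\infty$ and $u=+\infty$, which now make sense as continuous maps into $[-\infty,\infty]$. Fix $p \in \partial D$. Since $[-\infty,\infty]$ is compact, the fibre $F_p = \overline{C'} \cap (\{p\}\times[-\infty,\infty])$ is a closed subset of a compact interval. The proof then has two parts. First, $F_p$ is nonempty. Second, $F_p$ is connected, so that, being a closed connected subset of $\{p\} \times [-\infty,\infty]$, it is a closed segment, possibly reduced to a point.

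Nonemptiness is routine. Choose points $x_m \in D$ with $x_m \to p$, take any $y_m$ with $(x_m,y_m)\in C'$, and extract a convergent subsequence of $(y_m)$ using compactness of $[-\infty,\infty]$.

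For connectedness, argue by contradiction. Suppose $F_p = F_1 \sqcup F_2$ with $F_1,F_2$ nonempty, closed and disjoint. Because $F_p$ sits inside a compact totally ordered interval, we may assume $F_1$ lies entirely below some value $c\in\R$ and $F_2$ entirely above it, with $(p,c)\notin F_p$. Since $(p,c)\notin \overline{C'}$, there is a neighbourhood $U\times(c-\delta,c+\delta)$ of $(p,c)$ that misses $C'$. Now use the hypothesis that $D$ is locally boundary connected: pick a neighbourhood $N\subseteq U$ of $p$ from the given base with $D\cap N$ connected. Define $A_- = \{x\in D\cap N : l(x) < c\}$ and $A_+ = \{x\in D\cap N: u(x) > c\}$.

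Each point of $D \cap N$ lies in exactly one of $A_-$, $A_+$. Indeed, for $x\in N\subseteq U$, the vertical interval $(c-\delta,c+\delta)$ over $x$ avoids the fibre of $C'$ over $x$. That fibre is the interval $(l(x),u(x))$, or the single point $\{f(x)\}$, so it lies entirely below $c-\delta$ or entirely above $c+\delta$. In the sector case, being below means $u(x)\le c-\delta<c$; being above means $l(x)\ge c+\delta$. Accordingly I will redefine the two pieces as $\{u<c\}$ and $\{l>c\}$, which makes them disjoint. Both pieces are open in $D\cap N$ by continuity of $l$ and $u$ into $[-\infty,\infty]$.

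Both pieces are also nonempty. Points of $F_1$ are limits of points $(x_m,y_m)\in C'$ with $x_m\to p$ and $y_m$ eventually less than $c$; hence $x_m \in N$ and $x_m$ lies in the lower piece. Symmetrically, $F_2$ gives points in the upper piece. This contradicts the connectedness of $D\cap N$.

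The main obstacle is carrying out the separation step cleanly. We must choose $c$ strictly between $F_1$ and $F_2$ with $(p,c)\notin F_p$; this follows because $F_1,F_2$ are disjoint compact subsets of an interval and $F_p$ is disconnected. We must also handle the infinite bounds $\pm\infty$ uniformly, which is exactly why we pass to $[-\infty,\infty]$. The remaining care is ensuring that the base neighbourhood $N$ fits inside $U$, which is where we use that the locally boundary connected property provides a base of neighbourhoods rather than just one.
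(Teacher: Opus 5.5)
Your proof is correct and is essentially the paper's argument in contrapositive form. Both use the locally boundary connected base to get a neighbourhood $N$ of $p$ with $\pi_n(C)\cap N$ connected, and both approximate the two fibre points by points of $C'$ over $N$; the paper then applies the intermediate value theorem to the last-coordinate projection on the connected set $C'\cap(N\times[-\infty,\infty])$, whereas you split $\pi_n(C)\cap N$ into the open sets $\{u<c\}$ and $\{l>c\}$, and you also verify nonemptiness, which the paper leaves implicit.

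One small fix is needed. An arbitrary separation $F_1\sqcup F_2$ need not have $F_1$ below $F_2$, so choose $c$ from the non-convexity of the disconnected set $F_p$ instead: take $a<c<b$ with $(p,a),(p,b)\in F_p$ and $(p,c)\notin F_p$. This forces $c\in\mathbb{R}$, and it is all your argument actually uses.
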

\begin{proof}
    We consider $(p,y_1), (p,y_2) \in \overline{C'}$ with $y_1 < y_2$ and we show that for every $y \in (y_1,y_2)$, we have $(p,y) \in \overline{C'}$, i.e. for every neighbourhood $U$ of $(p,y)$ in $\R^n \times [-\infty, \infty]$, the intersection $U \cap C'$ is not empty. 
    
    Since the projection $\pi_n$ is open, $\pi_n(U)$ is a neighbourhood of $p$ in $\R^n$. By assumption on $\pi_n(C)$, there exists a neighbourhood $M \times N \subseteq U$ of $(p,y)$ in $\R^n \times [-\infty, \infty]$ such that the intersection $M \cap \pi_n(C)$ is connected. 
    It is straightforward to construct two neighbourhoods $V_1$ and $V_2$ of $y_1$ and $y_2$ in $[-\infty, \infty]$ such that for every $v_1 \in V_1, v_2 \in V_2$, $v_1 < y < v_2$. For $i \in \{1,2\}$, since the set $M \times V_i$ is a neighbourhood of $(p,y_i)$ in $\R^{n} \times [-\infty, \infty]$, there exists $(q_i,z_i) \in C' \cap ( M \times V_i)$. We consider the projection %to the last component
    $$f : C' \cap (M \times [-\infty, \infty]) \to [-\infty, \infty] : (a,b) \mapsto b.$$
    Since $C' \cap (M \times [-\infty, \infty])$ is connected, $f$ is continuous, and since $f(q_1,z_1) = z_1 < y < z_2 = f(q_2,z_2)$, we obtain $(q, y) \in C' \cap (M \times [-\infty, \infty])$. The conclusion follows since $(q,y) \in C' \cap (M \times N)$, and the latter is a subset of $C' \cap U$ by construction. 
\end{proof}

\begin{corollary}\label{cor:fibreSpace}
    Let $C$ be a CAD cell of $\R^{3}$. If $p \in \partial \pi_2(C)$, then the fibre $\overline{C'} \cap \left(\{p\}\times [-\infty, \infty]\right)$ is a closed segment.
\end{corollary}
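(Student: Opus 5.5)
The plan is to apply Proposition~\ref{prop:fibrelbc} with $n = 2$. The work reduces to checking its single hypothesis, namely that the base $\pi_2(C)$ is locally boundary connected.

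First, I note that $\pi_2(C)$ is itself a CAD cell of $\R^2$. This was observed at the end of Subsection~\ref{sub:CAD}: for any CAD cell $C$ of $\R^n$ with $n \geqslant 1$, the projection $\pi_{n-1}(C)$ is a CAD cell of $\R^{n-1}$, and $C$ is a section or a sector with that base. Applying this with $n = 3$ gives that $\pi_2(C)$ is a CAD cell of $\R^2$.

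Second, by Theorem~\ref{thrm:CADcellsPlanelbc}, $\pi_2(C)$ is then locally boundary connected in $\R^2$. Its own proof goes through Proposition~\ref{prop:CADcellsLowdim} and Proposition~\ref{prop:lbp-embedding}, applied to the embedding $i \times i$.

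Third, the hypotheses of Proposition~\ref{prop:fibrelbc} (with $n = 2$) are now satisfied for $C$. Hence for every $p \in \partial \pi_2(C)$ the fibre $\overline{C'} \cap (\{p\} \times [-\infty,\infty])$ is a closed segment, which is the claim.

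The only point to check carefully is that the notion of local boundary connectedness in Theorem~\ref{thrm:CADcellsPlanelbc} is the one used in Proposition~\ref{prop:fibrelbc}. Both take the ambient space to be $\R^2$ with its Euclidean topology and the boundary $\partial \pi_2(C) = \overline{\pi_2(C)} \setminus \pi_2(C)$ inside $\R^2$. The compactification $[-\infty,\infty]$ appears only in the last coordinate, through $C'$, so no mismatch arises. No further obstacle is expected: this is a direct corollary.
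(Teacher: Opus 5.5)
Your proposal is correct and follows the paper's own proof. The paper also applies Proposition~\ref{prop:fibrelbc} with $n=2$, using Theorem~\ref{thrm:CADcellsPlanelbc} to ensure that the CAD cell $\pi_2(C)$ of $\mathbb{R}^2$ is locally boundary connected.
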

\begin{proof}
   This follows from Theorem \ref{thrm:CADcellsPlanelbc}, Proposition \ref{prop:fibrelbc}.
\end{proof}

Finally, we apply the results of this section to a specific case arising naturally in Section~\ref{sec:proof} (where we make use of Lemma~\ref{prop:gluingLimit}).

\begin{corollary}\label{cor:bypass}
    Let $S$ be a closed and curtained semi-algebraic set of $\R^3$, $D = C \odot \{f\}$ be a CAD cell of $\R^3$ such that $D \subseteq S$ and $p \in \partial C$. 
    If $S$ has no curtain at $p$, then there exists a unique $b \in [-\infty,\infty]$ such that $\overline{D'} \cap (\{p\}\times [-\infty, \infty]) = \{(p,b)\}.$
    Moreover, we have 
    \begin{equation}\label{eqn:limf=b}
        \lim_{\substack{x \to p\\x\in C}}f(x) = b,
    \end{equation}
    and if $b \notin \{-\infty, \infty\}$, then $(p,b) \in S$.
\end{corollary}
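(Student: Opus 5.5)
The plan is to combine Corollary~\ref{cor:fibreSpace} with closedness and curtainedness of $S$. Here $C$ is a CAD cell of $\R^2$ and $D = C \odot \{f\}$, so $\pi_2(D) = C$ and $p \in \partial \pi_2(D)$.

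\textbf{Step 1: the fibre is a nonempty closed segment.} By Corollary~\ref{cor:fibreSpace}, $F := \overline{D'} \cap (\{p\}\times[-\infty,\infty])$ is a closed segment. It is nonempty, by compactness of $[-\infty,\infty]$: take $x_k \in C$ with $x_k \to p$; the points $(x_k, f(x_k))$ lie in $D'$, and a subsequence of $f(x_k)$ converges in $[-\infty,\infty]$.

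\textbf{Step 2: finite points of $F$ lie in $S$.} Since $D \subseteq S$ and $S$ is closed in $\R^3$, the closure of $D$ in $\R^3$ is contained in $S$. A point $(p,y)\in F$ with $y \in \R$ lies in $\overline{D'}\cap(\R^2\times\R)$, and this set is exactly the closure of $D$ in $\R^3$, because $\R^3$ is open in $\R^2\times[-\infty,\infty]$ and the embedding $\id\times i$ is the inclusion. Hence $F \cap (\{p\}\times\R) \subseteq S \cap (\{p\}\times \R)$. This also gives the last assertion of the corollary once uniqueness is established.

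\textbf{Step 3: $F$ is a single point.} Since $S$ has no curtain at $p$ and $S$ is curtained, $S \cap (\{p\}\times\R)$ is finite. Suppose the segment $F$ had two distinct endpoints $b_1 < b_2$ in $[-\infty,\infty]$. Then it would contain the whole interval $(b_1,b_2)\cap\R$, which is infinite, and this would contradict finiteness via Step 2. So $F = \{(p,b)\}$ for a unique $b$. I expect this to be the heart of the argument, and it is immediate once Corollary~\ref{cor:fibreSpace} is available.

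\textbf{Step 4: the limit.} To obtain \eqref{eqn:limf=b}, argue by contradiction with sequences in the compact space $[-\infty,\infty]$. If $f(x)$ did not tend to $b$ as $x\to p$ in $C$, there would be a neighbourhood $V$ of $b$ and $x_k \in C$ with $x_k\to p$ and $f(x_k)\notin V$. By compactness, a subsequence satisfies $f(x_k)\to c$ with $c \neq b$, since $c \notin \operatorname{int}(V)$. Then $(p,c)\in \overline{D'}\cap(\{p\}\times[-\infty,\infty]) = F$, contradicting uniqueness. Here sequences suffice because $\R^2\times[-\infty,\infty]$ is metrizable. The only mild obstacle is the bookkeeping between closures in $\R^3$ and in $\R^2\times[-\infty,\infty]$, which is handled in Step 2.
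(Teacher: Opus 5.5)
Your proof is correct and follows the paper's route. Corollary~\ref{cor:fibreSpace} together with closedness and curtainedness gives uniqueness, compactness of $[-\infty,\infty]$ gives nonemptiness, and closedness of $S$ gives $(p,b)\in S$. The only differences are that you spell out the uniqueness step the paper calls ``direct'', and you prove the limit \eqref{eqn:limf=b} by a self-contained sequential-compactness argument instead of citing Lemma~3.3.12 of Arnon.
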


Recall that Equation \eqref{eqn:limf=b} means that for every neighbourhood $V$ of $b$ in $[-\infty, \infty]$, there exists a neighbourhood $U$ of $p$ in $\R^n$ such that $f(C \cap U) \subseteq V$.

\begin{proof}
    The uniqueness is direct by Corollary \ref{cor:fibreSpace} and the assumptions on $S$. 
    For the existence, we consider $(p_k)_{k \in \N}$ a sequence in $C$ converging to $p$. Hence, $\left(\mathfrak{B}(i(f(p_k)))\right)_{k \in \N}$ is a bounded sequence in $[0,1]$. By the Bolzano–Weierstrass theorem, we can extract a converging subsequence  $\left(\mathfrak{B}(i(f(p_{m(k)})))\right)_{k \in \N}$ where $m : \N \to \N$ is strictly increasing. Hence, the sequence $$\Big(\left(p_{m(k)}, i(f(p_{m(k)}))\right)\Big)_{k \in \N}$$ converges to a point $(p, b) \in \R^2 \times [-\infty, \infty]$, so $\overline{D'} \cap (\{p\}\times [-\infty, \infty])$ is not empty. Equation \eqref{eqn:limf=b} is direct by Lemma 3.3.12 of \cite{arnon}. Finally, if $b \notin \{-\infty, \infty\}$, then $(p,b) \in \overline{D}$, and the latter is a subset of $S$ since $S$ is closed. 
\end{proof}

\section{Proof of the main theorem}\label{sec:proof}

Theorem \ref{thrm:main} is a direct consequence of the following stronger result. Indeed, every algebraic set is closed, curtained and semi-algebraic (see Lemma \ref{prop:curtained-algebraic}).
\begin{theorem}\label{thrm:main-curtained}
    Every finite family $\Fc = (S_1, \ldots, S_p)$ of closed and curtained semi-algebraic sets of $\R^3$ admits a minimum CAD.
\end{theorem}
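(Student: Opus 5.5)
By Proposition~\ref{prop:conflRed}, it suffices to prove that the reduction system $(\CAD_{3\text{-irr}}(\Fc), \leftarrow)$ is confluent. Since there is no infinite chain of reductions, local confluence is enough. So we fix $\Dr \leftarrow \Cr \to \Er$ in $\CAD_{3\text{-irr}}(\Fc)$, with $\Dr = \Phi_A(\Cr)$ and $\Er = \Phi_B(\Cr)$ for even nodes $A \neq B$ of $\Tree(\Cr,\Fc)$. By $3$-irreducibility, $|A|,|B| \in \{1,2\}$. We aim to produce a common lower bound $\Fr \in \CAD_{3\text{-irr}}(\Fc)$. This is done by performing on $\Dr$ the reduction induced by $B$ (relabelled through $\psi_A$), and symmetrically on $\Er$, and checking that both composites give the same partition. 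Lemma~\ref{lemma:smallerRed} keeps everything inside $\CAD_{3\text{-irr}}(\Fc)$.

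We split into cases according to the relative position of $A$ and $B$.

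\emph{Case $|A| = |B|$ and the merged triples are disjoint.} The tree reductions commute combinatorially, since condition \eqref{eqn:L} is unaffected. For lifting we use Proposition~\ref{prop:liftingReduction}(iv): the functions $\xi'_{I'}$ to check for the second reduction are unions over regions disjoint from or separated from those of the first. Their continuity then follows from that on $\Cr$, via the pasting lemma (Lemma~\ref{lemma:classicalGluing}) combined with Lemma~\ref{lemma:consecutive-closed}.

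\emph{Case $|A| = |B|$ and the triples overlap, e.g.\ $B = A \pm 2e_k$.} We merge the five consecutive cells at once, checking \eqref{eqn:L} by transitivity of equalities of labels. Continuity again follows from pasting on consecutive closed pieces, since each piece already glues continuously by the lifting of $\Phi_A$ and $\Phi_B$ separately.

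\emph{Case $|A| \neq |B|$, say $|A| = 1$ and $|B| = 2$.} This is the genuinely three-dimensional situation. If $B$ lies above a cell not among $C_{A-e_1}, C_A, C_{A+e_1}$, the two reductions are independent. The hard case is when $p_1(B) \in \{A-e_1, A, A+e_1\}$. Then, after merging the vertical line $C_A$ with its neighbouring sectors, the tree condition for $B$ in $\Dr$ may fail, because the corresponding cells above the other two pieces need not have matching labels. Conversely, in $\Er$ the tree condition for $A$ may fail, because $L$ at the merged node changed.

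This case is where I expect the main obstacle, and where closedness and curtainedness enter. My plan is to show that, within this configuration, the relevant sections of $\Cr_3$ lie in some $S_i$; this holds by $3$-irreducibility, since every section separates cells with different labels. Corollary~\ref{cor:bypass} then controls their boundary behaviour above points $p \in \partial C$ where $C = C_{B}$ or its neighbours. If no $S_i$ has a curtain at $p$, each such section $C\odot\{f\}$ has a limit $b$ at $p$, and $(p,b)\in S_i$. This forces the sections over $C_{B\pm e_2}$ and over $C_B$ to match the sections in the fibre above the boundary point. Hence the unions $\xi'_{I'}$ are continuous by Lemma~\ref{prop:gluingLimit}, and the labels agree. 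If some $S_i$ has a curtain at $p$, the whole vertical line lies in $S_i$. Such a point then belongs to a $0$-dimensional cell (or a section) over which the cylinder is constant. In that situation, I expect to show that the reduction that fails to lift cannot occur, or that the alternative reduction still yields a common coarsening.

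Combining these pieces, I would construct $\Fr$ explicitly as the partition obtained by performing both merges simultaneously. I then verify it is a CAD via Lemma~\ref{lemma:CADcharact} and the continuity of the glued functions. Finally I check $\Dr \stackrel{*}{\to} \Fr \stackrel{*}{\leftarrow} \Er$, possibly through intermediate reductions at level~$2$, using Lemma~\ref{lemma:thrm59Red} to convert $\preceq$ into reduction chains.
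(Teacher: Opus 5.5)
Your skeleton matches the paper's: local confluence of $(\CAD_{3\text{-irr}}(\Fc),\leftarrow)$, a case split on $|A|,|B|\in\{1,2\}$, pasting when $|A|=|B|$, Corollary~\ref{cor:bypass} with Lemma~\ref{prop:gluingLimit} when $|A|=1,|B|=2$, and Lemma~\ref{lemma:thrm59Red} to finish. The hard case, however, has a genuine gap.

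Take $B=(a_1-1,b_2)$ and an even $I'=(a_1-1,b_2-1,2j)$. In $\Dr$ the base you must glue onto is the merged curve $D_{(a_1-1,b_2)}=C_{(a_1-1,b_2)}\cup C_{(a_1,b_2)}\cup C_{(a_1+1,b_2)}$. You need $\lim\delta_{I'}=\delta_{(a_1-1,b_2,2j)}$ at \emph{every} point of this curve. The lifting of $\Phi_B$ on $\Cr$ gives this only over $C_B$; over $C_{(a_1,b_2)}$ and $C_{(a_1+1,b_2)}$ you know nothing directly. Corollary~\ref{cor:bypass} only says that a limit $b\in[-\infty,\infty]$ exists and that $(p,b)\in S_i$ when $b$ is finite. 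It does not say that $b$ is the value of the $j$-th section rather than of another section, a point of a sector, or $\pm\infty$. Your sentence ``this forces the sections \dots\ to match'' is exactly the step that needs proof, and your plan supplies none. Having the same number $L$ of sections over both bases does not stop the $j$-th section from tending to the $l$-th one.

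The missing idea is a connectedness argument on the merged base. Let $\Omega_l\subseteq D_{(a_1-1,b_2)}$, for $0\le l\le L+1$, be the set of points where the limit equals $\delta_{(a_1-1,b_2,2l)}$ (that is, $-\infty$ for $l=0$ and $+\infty$ for $l=L+1$). These sets have the following properties:
\begin{enumerate}
\item They cover $D_{(a_1-1,b_2)}$: the limit exists, and by curtainedness a finite limit cannot lie in a sector, since that sector would then lie in $S_i$ and give a curtain.
\item They are pairwise disjoint.
\item Each is open: the sections over the base are strictly ordered, so near a point of $\Omega_l$ only the $l$-th section is close to the limit values.
\item $\Omega_j\supseteq C_B\neq\emptyset$.
\end{enumerate}
Connectedness of the section $D_{(a_1-1,b_2)}$ then forces $\Omega_j=D_{(a_1-1,b_2)}$.

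You also leave the curtain subcase open (``I expect to show \dots''), but it never arises once $S_i$ is chosen correctly.
\begin{itemize}
\item Since $\Dr$ is $3$-irreducible, the labels of $D_{I'-e_3},D_{I'},D_{I'+e_3}$ are not all equal.
\item By \emph{closedness} (not $3$-irreducibility alone, as you claim) this gives an $S_i$ containing the section $D_{I'}$ but not an adjacent sector, say $D_{I'+e_3}$.
\item Because $\Psi_B$ is defined on $\Tree(\Dr,\Fc)$, the labels over $D_{(a_1-1,b_2-1)}$ and $D_{(a_1-1,b_2)}$ coincide, so $S_i$ also omits the sector $D_{(a_1-1,b_2,2j+1)}$.
\item Hence $S_i$ has a curtain at no point of $D_{(a_1-1,b_2)}$, which is all Corollary~\ref{cor:bypass} needs.
\end{itemize}
Requiring that \emph{no} $S_i$ has a curtain at $p$ is both too strong and unnecessary.

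Finally, your worry that the tree condition for $\Psi_{\psi_A(B)}$ might fail on $\Tree(\Dr,\Fc)$ is unfounded. The equality $L(A-e_1)=L(A)=L(A+e_1)$ means the labelled subtrees over the three merged cells coincide. Your remark that $\Psi_A$ fails on $\Tree(\Er,\Fc)$ is correct, so the symmetric ``reduce both and compare'' plan does not work here. Instead, reduce only on $\Dr$, prove $\Fr\preceq\Er$, and invoke Lemma~\ref{lemma:thrm59Red}.
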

%\lucas[inline]{Note that all the counter-examples of \cite{miniCAD-extended} are either not closed, or closed but not curtained (see Definition 4.3 and Example 4.9 of \cite{miniCAD-extended}).}
\begin{proof}
%Let $\Fc = (S_1, \ldots, S_p)$ be a finite family of closed and curtained semi-algebraic sets of $\R^3$.
We show that the poset $(\CAD(\Fc),\preceq)$ admits a minimum element. By Proposition \ref{prop:conflRed}, it is sufficient to show that the reduction system $(\CAD_{3-\text{irr}}(\Fc), \leftarrow)$ is confluent.

Let $\Cr, \Dr, \Er \in \CAD_{3-\text{irr}}(\Fc)$ such that there exist reduction rules $\Phi_A$ and $\Phi_B$ (with $A \neq B$) from $\Cr$ to $\Dr$ and $\Er$ respectively. We need to show that there exists $\Fr \in \CAD_{3-\text{irr}}(\Fc)$ such that $\Dr \stackrel{*}{\rightarrow} \Fr \stackrel{*}{\leftarrow} \Er$. 
% \lucas{useless?}The situation is depicted in Figure \ref{fig:ideaProof}.
% \begin{figure}[H]
%     \centering
%     \[\begin{tikzcd}
% 	& \Cr \\
% 	\Dr && \Er \\
% 	& \Fr
% 	\arrow["{\Phi_A}"', from=1-2, to=2-1]
% 	\arrow["{\Phi_B}", from=1-2, to=2-3]
% 	\arrow["{*}"', dashed, from=2-1, to=3-2]
% 	\arrow["{*}", dashed, from=2-3, to=3-2]
% \end{tikzcd}\]
%     \caption{Idea of the proof}
%     \label{fig:ideaProof}
% \end{figure}
Note that Lemma \ref{lemma:thrm59Red} asserts that it is equivalent to show the existence of $\Fr \in \CAD_{3-\text{irr}}(\Fc)$ such that $\Dr \succeq \Fr \preceq \Er$.

%An immediate case is when $A = B$. In this situation, we directly have $\Dr = \Er$ and we may conclude by taking $\Fr = \Dr$. 
%For the remaining case, when $A \neq B$, we prove the following result. 
%\lucas{le cas $A = B$ étant trivial, on considère $A \neq B$.}
%\begin{claim}\label{prop:phipsiAB}
    %Suppose that $A \neq B$. 
We now show that up to a transposition of $(A, \Dr)$ and $(B, \Er)$, the CAD reduction rule $\Phi_{\psi_A(B)}$ is defined on $\Dr$, and the corresponding reduced CAD $\Fr$ is smaller than $\Er$.
%\end{claim}
The proof of Theorem \ref{thrm:main-curtained} is finished once this claim is established. The remainder of this section is dedicated to proving it.
%\subsection{Proof of Claim \ref{prop:phipsiAB}}

To show that $\Phi_{\psi_A(B)}$ is defined on $\Dr$, we proceed by a proof by cases on the length of $A$ and $B$ in Subsections \ref{case:|A|=|B|} and \ref{case:|A|neq|B|}. Since the CADs considered here are all $3$-reduced, $A$ and $B$ must have length $1$ or $2$. In Subsection \ref{subsection:FpreceqC}, we show that $\Fr \preceq \Er$.

We write $\xi_I$ (resp. $\delta_I, \varepsilon_I, \varphi_I$) the functions used to define the decompositions $\Cr$ (resp. $\Dr, \Er, \Fr$) as in Definition \ref{def:cad}, where $I$ is an even index of $\N, \N^2$ or $\N^3$.

In both cases below, after the potential transposition of $(A, \Dr)$ and $(B, \Er)$, we show that $\Psi_{\psi_A(B)}$ is always defined on $\Tree(\Dr, \Fc)$, so that it suffices to check if this tree reduction lifts to a CAD reduction using Proposition \ref{prop:liftingReduction}.

\subsection{Case $A \neq B, |A| = |B|$}\label{case:|A|=|B|}
Up to a transposition, we can suppose that $A >_\text{lex} B$ (where $>_\text{lex}$ denotes the usual lexicographic ordering). In this case, $B \in F_A$ and hence $\psi_A(B) = B$. We write $k = |A| = |B| \in \{1,2\}$. 

We now show that the CAD tree reduction $\Psi_{\psi_A(B)} (= \Psi_B)$ is defined on 
$\Tree(\Dr, \Fc) = (T',L')$.
By definition, this holds if and only if
\begin{equation}\label{eqn:L'(B)}
    L'(B-e_k) = L'(B) = L'(B+e_k).
\end{equation}
For any tuple $I$ having prefix $B-e_k$, $B$, or $B+e_k$, we have $I <_\text{lex} A$ (where $A$ is padded with trailing zeros when $|I| > |A|$).  
Hence $I \in F_A$ and $\psi_A(I) = I$.  
Since $T' = \psi_A(T)$, the subtrees of $T'$ with prefixes $B-e_k$, $B$, and $B+e_k$ coincide with the corresponding subtrees of $T$, where $\Tree(\Cr, \Fc) = (T,L)$. 
If $I$ is in addition a leaf of $T'$, then by definition of the reduction $\Psi_A$ (defined from $\Tree(\Cr, \Fc)$ to $\Tree(\Dr, \Fc)$) we have $L'(I) = L'(\psi_A(I)) = L(I).$
Thus, Equation~\eqref{eqn:L'(B)} simply becomes $L(B-e_k) = L(B) = L(B+e_k),$
which is indeed satisfied since $\Psi_B$ is defined on $\Tree(\Cr, \Fc)$.

We now check that the CAD tree reduction $\Psi_{B}$ lifts to a CAD reduction defined on $\Dr$.
% If $k = 3$, then we write $B = b_1b_2b_3 \in (\N^*)^3$  and observe that $V$ has a curtain at $D_{\psi_A(b_1b_2)} = C_{b_1b_2}$ by Proposition \ref{prop:curtain}. Using the same result again, we obtain that $\Phi_{\psi_A(B)}$ is defined on $\Dr$, as requested.
%We divide the discussion of the two cases ($k = 1$ and $k = 2$) into two subcases. \todo[inline]{
We deal with both cases $k = 1$ and $k = 2$ at once, but divide the discussion in two cases according to whether the condition $A = B + 2e_{k}$ holds or not.
Intuitively, in the first case, $B$ and $A$ denote successive sections of the CAD $\Cr_k$ and we merge five cells of $\Cr_{k}$ in order to obtain $\Fr_{k}$, whereas in the second case, we merge separately two groups of three cells each. Then, the levels above the $k$-th one of the CAD $\Fr$ are obtained by merging the respective cells above those already merged at level $k$.
 
%The first subcase occurs when $B$ and $A$ denote successive sections of the CAD $\Cr_k$ (i.e.\ $A = B + 2e_{k}$), and the second subcase is the remaining one (i.e.\ $A >_\text{lex} B + 2e_{k}$). 

%Intuitively, in the first subcase\lucas{subcase $\to$ case}, we merge five cells of $\Cr_{k}$ in order to obtain $\Fr_{k}$, whereas in the second subcase, we merge separately two groups of three cells each. 
%The levels above the $k$-th one of the CAD $\Fr$ are obtained by merging the respective cells above those already merged at level $k$.

By Proposition \ref{prop:liftingReduction}, to show that the CAD reduction rule $\Phi_B$ is defined on $\Dr$, it is sufficient to show that for every even $I' \in S_{B-e_k} \cap \psi_{B}(T')$, the function 
$\varphi_{I'} = \delta_{I'} \cup \delta_{I'+e_k} \cup \delta_{I'+2e_k}$
is continuous. By definition, if $I$ is an even index such that $D_I \in \Dr_2 \cup \Dr_3$, then
    \begin{align*}
        \delta_{I} = 
        \begin{cases}
            \xi_{I} \cup \xi_{I+ e_k} \cup \xi_{I + 2 e_k} &\text{ if } I \in S_{A - e_k}, \\
            \xi_{I + 2 e_k} &\text{ if } I \in S_A \cup N_A, \\
            \xi_{I} &\text{ otherwise.}
        \end{cases}
    \end{align*}
    
We suppose that $A = B + 2e_{k}$. In this case, $A-e_k = B + e_k$ and since $I' \in S_{B-e_k}$, it is clear that $I', I'+e_k \notin S_{A-e_k} \cup S_A \cup N_A$, but $I'+2e_k \in S_{A-e_k}$.  In particular, we have 
\begin{align*}
    \varphi_{I'} =& \xi_{I'} \cup \xi_{I'+e_k}\cup \xi_{I'+2e_k}\cup \xi_{I'+3e_k}\cup \xi_{I'+4e_k}\\
    = &\left(\xi_{I'} \cup \xi_{I'+e_k}\cup \xi_{I'+2e_k} \right)\\
     & \phantom{eeeeeeeeee}\cup\left(\xi_{I'+2e_k}\cup \xi_{I'+3e_k}\cup \xi_{I'+4e_k}\right),
\end{align*}
where the last equality is obtained by a rearrangement of terms.
Since $\varepsilon_{I'} = \xi_{I'} \cup \xi_{I'+e_k}\cup \xi_{I'+2e_k}$ and $\delta_{I'+2e_k} = \xi_{I'+2e_k}\cup \xi_{I'+3e_k}\cup \xi_{I'+4e_k}$, we obtain that $\varphi_{I'}$ is the union of these two functions, whose domains are open in the domain of $\varphi_{I'}$ by Lemma \ref{lemma:consecutive-closed}. The map $\varphi_{I'}$ is thus continuous by Lemma \ref{lemma:classicalGluing}.

We now suppose that  $A >_\text{lex} B + 2e_{k}$. In this case, $I', I'+e_k, I'+2e_k \notin S_{A-e_k} \cup S_A \cup N_A$ and thus 
$\varphi_{I'} = \xi_{I'} \cup \xi_{I'+e_k} \cup \xi_{I'+2e_k}.$
This union is precisely the definition of $\varepsilon_{I'}$, which is continuous by assumption.

%This concludes the proof of Claim \ref{prop:phipsiAB} in the case where $A \neq B$ and $|A| = |B|$. \lucas{remove with the removing of the claim 1}

\subsection{Case $A \neq B, |A| \neq |B|$}\label{case:|A|neq|B|}
Up to a transposition, we can suppose that $1 = |A|<|B| = 2$, and we write $A = a_1 \in \N^*, B = b_1b_2 \in (\N^*)^2$.

A discussion analogous to the one of the preceding case shows that 
$\Psi_{\psi_A(B)}$ is indeed defined on $\Tree(\Dr, \Fc) = (T',L')$.

We now show that this CAD tree reductions lifts to a CAD reduction $\Phi_{\psi_A(B)}$ defined on $\Dr$.
%If $|B| = 3$, then $B = b_1b_2b_3 \in (\N^*)^3$ and $V$ has a curtain at $C_{b_1b_2} \in \Cr_2$ by Proposition \ref{prop:curtain}. Using again this result, it is sufficient to show that $V$ has a curtain at $D_{\psi_A(b_1b_2)} \in \Dr_2$ to conclude the proof in this case. This is straightforward since $C_{b_1b_2} \subseteq D_{\psi_A(b_1b_2)}$ by definition of $\Dr$.
By Proposition \ref{prop:liftingReduction}, it is sufficient to show that for every even $I' \in S_{\psi_A(B)-e_2} \cap \psi_{\psi_A(B)}(T')$, the function 
    $\varphi_{I'} = \delta_{I'} \cup \delta_{I'+e_2} \cup \delta_{I'+2e_2}$
is continuous. We assume that there exists at least one such index $I'$ (otherwise the proof is over).

If $B \notin S_{A-e_1} \cup S_A \cup S_{A+e_1}$, we proceed as above to compute
$$\varphi_{I'} = \begin{cases}
    \varepsilon_{I'+2e_1} & \text{ if } B \in N_A,\\
    \varepsilon_{I'} & \text{ otherwise.}
\end{cases}$$
In both cases, the assumptions on $\Er$ imply that $\varphi_{I'}$ is continuous.  

Now, we assume that $B \in S_{A - e_1} \cup S_A \cup S_{A + e_1}$. We will treat the case $B \in S_{A - e_1}$. The other two cases ($B \in S_A$ and $B \in S_{A + e_1}$) are handled in the same way.
We have $B = (a_1-1,b_2)$ and $\psi_A(B) = B$. In particular, we obtain $I' = (a_1-1,b_2-1,2j)$ for some $j \in \N^*$.

%\begin{claim}\label{claim:technicalContinuity} 
We now show that the functions $\delta_{I'} \cup \delta_{I'+e_2}$ and $\delta_{I'+e_2} \cup \delta_{I'+2e_2}$ are continuous on $D_{(a_1-1, b_2-1)} \cup D_{(a_1-1, b_2)}$ and $D_{(a_1-1, b_2)} \cup D_{(a_1-1, b_2+1)}$ respectively.
The fact that $\varphi_{I'}$ is continuous follows directly (in combination with Lemma \ref{lemma:classicalGluing} and Lemma \ref{lemma:consecutive-closed}).
%The proof of this claim is delayed at the end of the section.
% The conclusion of the proof of Claim \ref{prop:phipsiAB} in the case $|A|=1, |B|=2$ follows from Claim \ref{claim:technicalContinuity}. Indeed, combined with Lemma \ref{lemma:classicalGluing} and Lemma \ref{lemma:consecutive-closed}, we obtain that $\varphi_{I'}$ is continuous.
%Similar developments show the continuity of $\delta_{I'+e_2} \cup \delta_{I'+2e_2}$.
%\begin{proof}[Proof of Claim \ref{claim:technicalContinuity}.]

To show that $\delta_{I'} \cup \delta_{I'+e_2}$ is continuous on $D_{(a_1-1, b_2-1)} \cup D_{(a_1-1, b_2)}$, it is equivalent (see Lemma \ref{prop:gluingLimit}) to show that for every $(x_0, y_0) \in D_{(a_1-1,b_2)}$, we have 
$$\lim_{\substack{(x,y) \to (x_0,y_0)\\(x,y)\in D_{(a_1-1, b_2-1)}}}\delta_{I'}(x,y)=\delta_{(a_1-1,b_2,2j)}(x_0,y_0).$$ %\todo{$I'+e_2 \to (a_1-1,b_2,2j)$}
%As it will play a crucial role below, recall that $D_{(a_1-1,b_2)}$ is connected since it is the section $D_{a_1-1} \odot \delta_{a_1-1,b_2}$, where $D_{a_1-1}$ is an open interval and $\delta_{a_1-1,b_2}$ is continuous. 
    %Recall that $I'+e_2 = (a_1-1,b_2,2j)$.
    We denote by $L \in \N$ the number of sections of $\Dr$ with base $D_{(a_1-1,b_2)}$. Since $\Psi_{\psi_A(B)} = \Psi_{B}$ is defined on $\Tree(\Dr, \Fc)$, the CAD $\Dr$ has also $L$ sections with base $D_{(a_1-1,b_2-1)}$. Hence, $L \neq 0$. 
    By convention (see Definition \ref{def:cad}), $\delta_{a_1-1,b_2,0} = -\infty$ and $\delta_{a_1-1,b_2,2(L+1)} = \infty$. For every $l \in \{0, 1, \ldots, L, L+1\}$, we define $\Omega_{l}$ to be the set of those points $(x_0,y_0) \in D_{(a_1-1,b_2)}$ satisfying
    $$\lim_{\substack{(x,y) \to (x_0,y_0)\\(x,y)\in D_{(a_1-1, b_2-1)}}}\delta_{I'}(x,y)=\delta_{(a_1-1,b_2,2l)}(x_0,y_0).$$
    We then have to prove that $D_{(a_1-1,b_2)} = \Omega_j$. It follows from the connectedness of $D_{(a_1-1,b_2)}$ together with the following assertions, which we prove successively:
    \begin{enumerate}[(i)]
        \item We have $D_{(a_1-1,b_2)} = \Omega_0 \cup \Omega_1 \cup \ldots \cup \Omega_L \cup \Omega_{L+1}$. \label{item:OmegaCovering}
        \item The sets $\Omega_{0}, \ldots, \Omega_{L+1}$ are pairwise disjoint. \label{item:disjoint}
        \item The sets $\Omega_{0}, \ldots, \Omega_{L+1}$ are open in $D_{(a_1-1,b_2)}$. \label{item:open}
        \item The set $\Omega_j$ is not empty. \label{item:notempty}
    \end{enumerate}    
   We first observe that $\Omega_l \subseteq D_{(a_1-1,b_2)}$ for every $l \in \{0, \ldots, L+1\}$ by definition. For the other inclusion of Assertion \ref{item:OmegaCovering}, let $(x_0,y_0) \in D_{(a_1-1,b_2)}$. We show that there exists $l\in \{0, \ldots, L+1\}$ such that $(x_0,y_0) \in \Omega_l$. 
    Since $L \neq 0$, $\Dr \in \CAD_{\text{3-irr}}(\Fc)$, and since $\Psi_{\psi_A(B)}= \Psi_{B}$ is defined on $\Tree(\Dr, \Fc)$, there exists $S_i \in \Fc$ such that $D_{I'}\subseteq S_i$ and $S_i$ has no curtain at $\{(x_0,y_0)\}$ (since otherwise the CAD$(\Fc)$ reduction rule $\Phi_{I'}$ would be defined on $\Dr$, which is absurd since it is irreducible at the last level).
    Using Corollary \ref{cor:bypass}, there exists a unique $b \in [-\infty;\infty]$ such that 
    $$\lim_{\substack{(x,y) \to (x_0,y_0)\\(x,y)\in D_{(a_1-1, b_2-1)}}}\delta_{I'}(x,y) = b.$$
    Moreover, if $b \notin \{-\infty, \infty\}$ (otherwise $(x_0,y_0) \in \Omega_0 \cup \Omega_{L+1}$), then $(x_0,y_0,b) \in S_i$. We suppose that this point belongs to some sector $D_{(a_1-1,b_2,2l+1)}$ of $\Dr$, and we obtain a contradiction. Since $\Dr$ is adapted to $S_i$, the whole sector $D_{(a_1-1,b_2,2l+1)}$ is a subset of $S_i$. Thus, $S_i$ has a curtain at $D_{(a_1-1,b_2)}$, which is absurd by the choice of $S_i$. 
    We obtain that the point $(x_0,y_0,b)$ belongs to some section $D_{(a_1-1,b_2,2l)}$ of $\Dr$ for some $l \in \{1, \ldots, L\}$, i.e. $(x_0,y_0) \in \Omega_l$.
    
    The definition of the limit readily implies Assertion \ref{item:disjoint}. %(see Lemma 3.3.12 of \cite{arnon})
    
    Note that part of the following proof of Assertion \ref{item:open} is similar to the proof of Theorem 3.4.4 of \cite{arnon}. 
    Let $(x_0,y_0) \in \Omega_l$ for some $l \in \{0, \ldots, L+1\}$. Since $\delta_{(a_1-1,b_2,0)}<\delta_{(a_1-1,b_2,2)} < \ldots < \delta_{(a_1-1,b_2,2L)}< \delta_{(a_1-1,b_2,2(L+1))}$ on $D_{(a_1-1,b_2)}$, there exist an open ball $U \subset \R^2$ centred at $(x_0,y_0)$ and an open interval $I\subseteq [-\infty, \infty]$ containing $\delta_{(a_1-1,b_2,2l)}(x_0,y_0)$ such that $D_{(a_1-1,b_2,2l)}$ is the only section of $\Dr$ built above $D_{(a_1-1,b_2)}$ which meets $\overline{U\times I}$ (the closure of $U \times I$ in $\R^2 \times [-\infty, \infty]$). Moreover, using the fact that $(x_0,y_0) \in \Omega_l$, there exists an open neighbourhood $V$ of $(x_0,y_0)$ in $\R^2$ such that $\delta_{I'}(V \cap D_{(a_1-1,b_2-1)}) \subset I$. We show that the open neighbourhood $N = U \cap V \cap D_{(a_1-1,b_2)}$ of $(x_0,y_0)$ in $D_{(a_1-1,b_2)}$ is a subset of $\Omega_l$. Let $(x_1,y_1) \in N$. By Item \ref{item:OmegaCovering}, there exists $k \in \{0, \ldots, L+1\}$ such that $(x_1,y_1) \in \Omega_k$. Since $N \subseteq V$, we must have  $\delta_{(a_1-1,b_2,2k)}(x_1,y_1) \in \overline{I}$. Since $N \subseteq U$, and since $(x_1,y_1, \delta_{(a_1-1,b_2,2k)}(x_1,y_1)) \in \overline{U \times I}$, we must have $k = l$.
    
    The existence of the reduction rule $\Phi_B$ form $\Cr$ to $\Er$ implies that $C_{B} \subseteq \Omega_j$, and hence Assertion \ref{item:notempty} follows. 
%\end{proof}

\subsection{The CAD $\Fr$ is smaller than the CAD $\Er$}\label{subsection:FpreceqC}

We consider the CAD $\Fr$ as in Subsection \ref{case:|A|=|B|} or \ref{case:|A|neq|B|}, and we show that $\Fr \preceq \Er$. It is equivalent to prove that for every cell $E_I \in \Er$, there exists a cell $F_J \in \Fr$ such that $E_I \subseteq F_J$.
Using Lemma \ref{lemma:psiBij}, we know that 
    \begin{align*}
        E_{I} = 
        \begin{cases}
            C_{I} \cup C_{I+ e_{|B|}} \cup C_{I + 2 e_{|B|}} &\text{ if } I \in S_{B - e_{|B|}}, \\
            C_{I + 2 e_{|B|}} &\text{ if } I \in S_B \cup N_B, \\
            C_{I} &\text{ otherwise.}
        \end{cases}
    \end{align*}
    By construction of $\Fr$ (see Subsections \ref{case:|A|=|B|} and \ref{case:|A|neq|B|}), every cell $C_{I'} \in \Cr$ is contained in some cell of $\Fr$, and we can track its index with respect to $I'$. 
    More precisely, we have 
$$C_{I'} \subseteq D_{\psi_A(I')} \subseteq F_{\psi_{\psi_A(B)}(\psi_A(I'))}.$$ 
If $I \notin S_{B - e_{|B|}}$, then $E_I \in \Cr$ and is therefore a subset of some cell of $\Fr$.
If $I \in S_{B - e_{|B|}}$, then we need to show that the three cells $C_I,  C_{I+ e_{|B|}}$ and $C_{I + 2 e_{|B|}}$ constituting $E_I$ are all subsets of the same cells of $\Fr$.
Equivalently, we must verify the equalities
\begin{align*}
    \psi_{\psi_A(B)}(\psi_A(I)) = \psi_{\psi_A(B)}(\psi_A(I+e_{|B|})) = \psi_{\psi_A(B)}(\psi_A(I+2e_{|B|})).
\end{align*}
A straightforward computation shows that this formula holds in each relevant cases. 
This concludes the proof of Theorem \ref{thrm:main-curtained}.
\end{proof}

\section{Further Work}

First, the results of this paper are primarily theoretical. To move towards an effective implementation of an algorithm computing minimum CADs, it will be necessary to develop concrete and computationally feasible criteria for deciding when a tree reduction lifts to a CAD reduction. We conjecture that the adjacency relations between cells (see for instance \cite{ArnonCollinsMcCallum3D} and \cite{strzebonski}) will play a central role in formulating such criteria.

Next, the grail would be the design of a projection operator whose associated CAD algorithm directly produces a minimum CAD adapted to the input. Such an operator is unlikely to rely only on the classical discriminant and resultant based projection operators, since these inherently consider all complex roots rather than the real roots only.

Finally, the techniques employed here to prove Theorem \ref{thrm:main} seem to not generalize immediately in $\R^n$ with $n \geqslant 4$. First, the case distinction would be more intricate. Second, Corollary \ref{cor:fibreSpace} fails in general in $\R^4$ (see Proposition 3.5 of \cite{exotic}).
\begin{quote}
\textbf{Question.}  
Does every finite family of algebraic sets in $\mathbb{R}^n$ ($n \in \N^*$) admit a minimum CAD?
\end{quote}
This question remains open for $n \geqslant 4$.

%%
%% The acknowledgments section is defined using the "acks" environment
%% (and NOT an unnumbered section). This ensures the proper
%% identification of the section in the article metadata, and the
%% consistent spelling of the heading.
\begin{acks}
The author would like to thank Pierre Mathonet for numerous helpful suggestions, Naïm Zénaïdi for fruitful discussions and the reviewers for their valuable feedback and comments. The author is supported by the FNRS-DFG PDR Weaves (SMT-ART) grant 40019202.
\end{acks}

%%
%% The next two lines define the bibliography style to be used, and
%% the bibliography file.
%\nocite{*}
\bibliographystyle{ACM-Reference-Format}
%\bibliography{main}

%%% -*-BibTeX-*-
%%% Do NOT edit. File created by BibTeX with style
%%% ACM-Reference-Format-Journals [18-Jan-2012].

\end{document}